\pdfoutput=1
\documentclass[10pt, a4paper]{article}

\usepackage[utf8]{inputenc}
\usepackage[T1]{fontenc}
\usepackage{lmodern} 
\usepackage{fontawesome}

\usepackage[margin=2cm]{geometry}

\usepackage{fancyhdr}
\usepackage[us,24hr]{datetime} 
\fancypagestyle{plain}{
\fancyhf{}
\rfoot{}
\lfoot{Page \thepage}
}
\pagestyle{plain}

\frenchspacing
\usepackage[activate={true,nocompatibility},final,tracking=true,
kerning=true,spacing=true,factor=1100,stretch=10,shrink=10]{microtype}
\SetTracking{encoding={*}, shape=sc}{40}
\usepackage{ellipsis}         

\usepackage[abbreviations,british]{foreign} 

\usepackage{scalerel}

\usepackage{etoolbox}

\usepackage{xcolor}
\usepackage{hyperref}

\definecolor{darkmidnightblue}{rgb}{0.0, 0.2, 0.4}
\definecolor{persianplum}{rgb}{0.44, 0.11, 0.11}

\hypersetup{
  colorlinks  = true, 
  citecolor   = persianplum, 
  urlcolor    = persianplum, 
  linkcolor   = persianplum
}

\usepackage{tikz}
\usepackage{todonotes}
\usepackage{xspace}

\usepackage{amssymb} 
\usepackage{amsmath}
\usepackage{amsthm}
\usepackage{mathtools}
\usepackage[bb=boondox]{mathalfa} 


\usepackage[capitalise, noabbrev]{cleveref}

\newtheorem{thm}{Theorem}[section]
\newtheorem*{thm*}{Theorem}
\newtheorem{definition}[thm]{Definition}

\newtheorem{corollary}[thm]{Corollary}
\newtheorem{example}[thm]{Example}
\newtheorem{fact}[thm]{Fact}
\newtheorem{lemma}[thm]{Lemma}

\usetikzlibrary{arrows, decorations.markings, shapes, calc}
\usepackage{graphics/nicolas-markey-expose}

\usetikzlibrary{shadows}
\tikzset{
diagonal fill/.style 2 args={fill=#2, path picture={
\fill[#1, sharp corners] (path picture bounding box.south west) -|
                         (path picture bounding box.north east) -- cycle;}},
reversed diagonal fill/.style 2 args={fill=#2, path picture={
\fill[#1, sharp corners] (path picture bounding box.north west) |- 
                         (path picture bounding box.south east) -- cycle;}}
}

\usetikzlibrary{hobby,backgrounds,calc,trees}

\usepackage{tikz-cd}
\usetikzlibrary{matrix}


 \makeatletter
 \def\desclabel#1#2{\begingroup
    \def\@currentlabel{#1}%
    #1\label{#2}\endgroup
 }
 \makeatother

\usepackage{caption}
\usepackage{subcaption}
\usepackage{float}

\usepackage{braket}

\usepackage[shortlabels]{enumitem} 
\usepackage{multicol}

\usepackage[h]{esvect}


\usepackage[linesnumbered,ruled,vlined]{algorithm2e}

\SetCommentSty{mycommfont}
\SetKwInput{KwData}{Input}

\usepackage{xurl} 

\newcommand{\orcid}[1]{\href{https://orcid.org/#1}{\includegraphics[width=9pt]{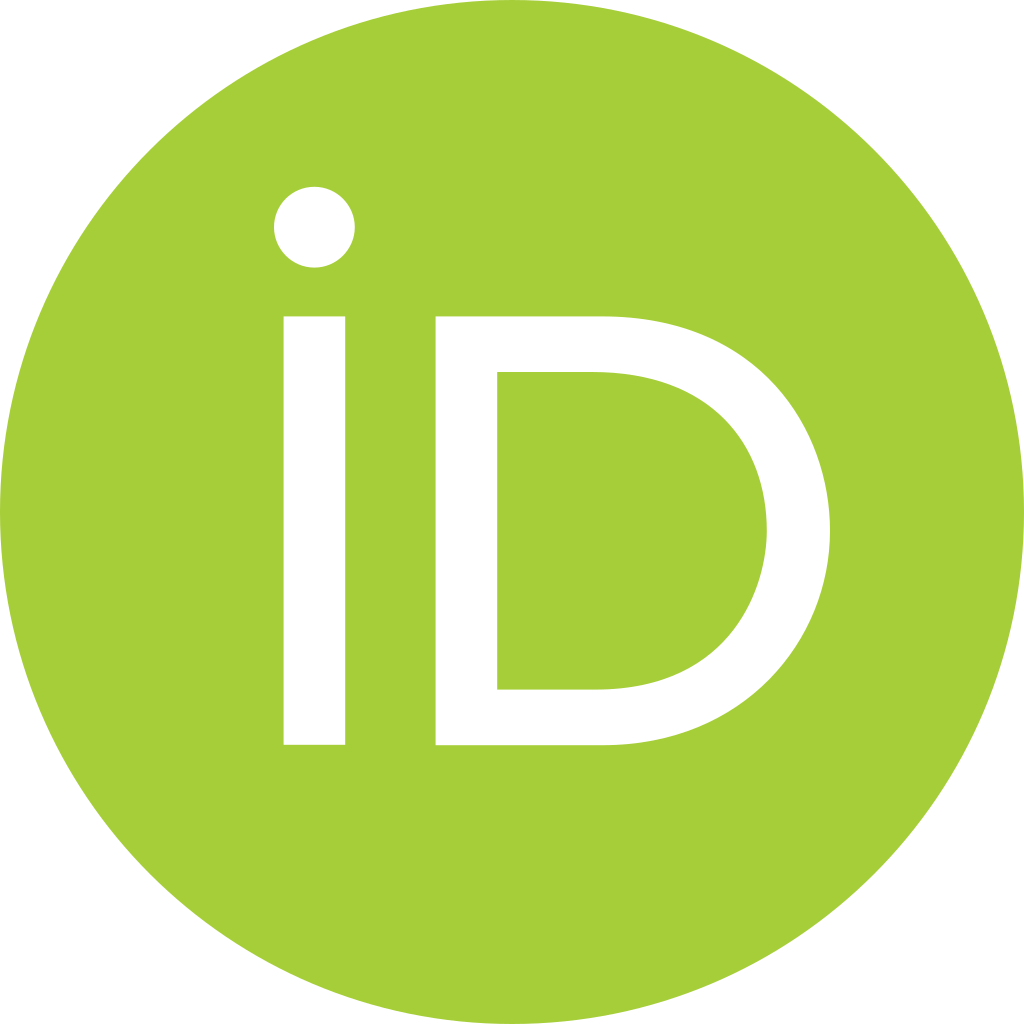}}}


 \makeatletter
 \def\desclabel#1#2{\begingroup
    \def\@currentlabel{#1}%
    #1\label{#2}\endgroup
 }
 \makeatother


\definecolor{ao(english)}{rgb}{0.0, 0.5, 0.0}

\newrobustcmd{\newnotion}[1]{\emph{#1}}

\newcommand{\dland}{\sqcap}
\newcommand{\dlor}{\sqcup}
\newcommand{\topconcept}{\top}
\newcommand{\botconcept}{\bot}
\newcommand{\dlsubseteq}{\sqsubseteq}

\makeatletter
\providecommand{\bigdland}{%
  \mathop{%
    \mathpalette\@updown\bigsqcup
  }%
}
\newcommand*{\@updown}[2]{%
  \rotatebox[origin=c]{180}{$\m@th#1#2$}%
}
\makeatother

\newcommand{\DL}[1]{\ensuremath{\mathcal{#1}}}  

\newcommand{\ALC}{\DL{ALC}}                     
\newcommand{\ALCI}{\DL{ALCI}}                   
\newcommand{\ALCO}{\DL{ALCO}}                   
\newcommand{\ALCQ}{\DL{ALCQ}}                   
\newcommand{\ALCH}{\DL{ALCH}}                   
\newcommand{\ALCSelf}{\DL{ALC}_{\textsf{Self}}} 
\newcommand{\ALCHQ}{\DL{ALCHQ}}                 
\newcommand{\ALCSCC}{\DL{ALCSCC}}               
\newcommand{\ALCOIQ}{\DL{ALCOIQ}}               
\newcommand{\ALCcap}{\ALC^{\cap}}               
\newcommand{\muALC}{\mu\ALC}                    
\newcommand{\ALCHbregQ}{\DL{ALCH}\textit{b}_{\textsf{reg}}\DL{Q}}

\newcommand{\SHQ}{\DL{SHQ}}                 

\newcommand{\Z}{\DL{Z}}                         

\newcommand{\abstrDL}{\DL{L}}     

\newcommand{\GF}{\DL{GF}}

\newcommand{\complexityclass}[1]{\textsc{#1}} 
\newcommand{\ExpTime}{\complexityclass{ExpTime}} 
\hyphenation{Exp-Time} 
\newcommand{\TwoExpTime}{\complexityclass{2ExpTime}} 
\hyphenation{Exp-Space} 

\newcommand{\lang}[1]{\mathbf{#1}}  
\newcommand{\Ilang}{\lang{N_I}}     
\newcommand{\Rlang}{\lang{N_R}}     
\newcommand{\Clang}{\lang{N_C}}     
\newcommand{\Vlang}{\lang{N_V}}     

\newcommand{\query}[1]{\mathit{#1}}  
\newcommand{\queryq}{\query{q}}      
\newcommand{\match}[1]{#1}          
\newcommand{\matchpi}{\match{\pi}}  
\newcommand{\modelsmatch}[1]{\models_{#1}} 
\newcommand{\modelsfin}{\models_\mathrm{fin}}       
\newcommand{\modelsoptfin}{\models_\mathrm{(fin)}}  
\newcommand{\queryVar}[1]{\mathrm{Var}{(#1)}}   
\newcommand{\queryVarq}{\queryVar{\queryq}}     

\newcommand{\var}[1]{\mathit{#1}}   
\newcommand{\varx}{\var{x}}         
\newcommand{\vary}{\var{y}}         
\newcommand{\varz}{\var{z}}         
\newcommand{\varv}{\var{v}}         
\newcommand{\varu}{\var{u}}         
\newcommand{\varw}{\var{w}}         

\newcommand{\role}[1]{\mathit{#1}}      
\newcommand{\roler}{\role{r}}           
\newcommand{\roles}{\role{s}}           

\newcommand{\concepts}{\lang{C}}            
\newcommand{\alccapconcepts}{\concepts} 

\newcommand{\concept}[1]{\mathrm{#1}}       
\newcommand{\conceptA}{\concept{A}}         
\newcommand{\conceptB}{\concept{B}}         
\newcommand{\conceptC}{\concept{C}}         
\newcommand{\conceptD}{\concept{D}}         

\newcommand{\indv}[1]{\texttt{#1}}  
\newcommand{\indva}{\indv{a}}       
\newcommand{\indvb}{\indv{b}}       
\newcommand{\indvc}{\indv{c}}       

\newcommand{\kb}[1]{\mathcal{#1}}   
\newcommand{\kbK}{\kb{K}}           
\newcommand{\abox}[1]{\mathcal{#1}} 
\newcommand{\aboxA}{\abox{A}}       
\newcommand{\tbox}[1]{\mathcal{#1}} 
\newcommand{\tboxT}{\tbox{T}}       
\newcommand{\ind}[1]{\mathsf{ind}{(#1)}} 
\newcommand{\indA}{\ind{\aboxA}}    
\newcommand{\indK}{\ind{\kbK}}    
\newcommand{\names}{\mathsf{N}}     


\newcommand{\inter}[1]{\mathcal{#1}}    
\newcommand{\interI}{\inter{I}}         
\newcommand{\interJ}{\inter{J}}         

\newcommand{\DeltaInter}[1]{\Delta^{#1}}                
\newcommand{\DeltaI}{\DeltaInter{\interI}}              
\newcommand{\DeltaJ}{\DeltaInter{\interJ}}              

\newcommand{\cdotInter}[1]{\cdot^{#1}}          
\newcommand{\cdotI}{\cdotInter{\interI}}        

\newcommand{\interIquery}[1]{\inter{I}_{#1}}        
\newcommand{\interIqueryq}{\interIquery{\queryq}}   
\newcommand{\cdotIqueryq}{\cdotInter{\interIqueryq}}   

\newcommand{\domelem}[1]{\mathrm{#1}}                           
\newcommand{\domelemc}{\domelem{c}}                             
\newcommand{\domelemd}{\domelem{d}}                             
\newcommand{\domeleme}{\domelem{e}}                             

\newcommand{\Deltanamed}[2]{\DeltaInter{#1}_{\mathrm{named}{(#2)}}}   
\newcommand{\DeltaInamednames}{\Deltanamed{\interI}{\names}}        

\newcommand{\deff}{:=}

\newcommand{\N}{\mathbb{N}}
\renewcommand{\set}[1]{\{#1\}}


\newcommand{\pathrho}{\rho}

\renewcommand{\restriction}{\mathord{\upharpoonright}}
\newcommand{\restr}[2]{#1\restriction_{#2}} 
\newcommand{\Nbhood}[3]{\mathsf{Nbd}^{#2}_{#1}{(#3)}}  

\newcommand{\homo}[1]{\mathfrak{#1}}    
\newcommand{\homof}{\homo{f}}           
\newcommand{\homoh}{\homo{h}}           

\newcommand{\datadomD}{\mathbb{D}}          
\newcommand{\isancestor}{\prec}             
\newcommand{\isancestoreq}{\preceq}         
\newcommand{\subtreeofrootedat}[2]{#1^{[#2 \isancestoreq]}} 
\newcommand{\childrenof}[2]{\mathsf{Chlds}_{#1}{(#2)}}       
\newcommand{\ancestorsof}[2]{\mathsf{Ancrs}_{#1}{(#2)}}      
\newcommand{\Childrenofvar}[1]{\mathsf{Chlds}{(#1)}}       


\newcommand{\interfwdunrav}[3]{{#1}^{#2}_{#3}}  
\newcommand{\Deltafwdunrav}[3]{\DeltaInter{\interfwdunrav{#1}{#2}{#3}}}  
\newcommand{\cdotfwdunrav}[3]{\cdotInter{\interfwdunrav{#1}{#2}{#3}}}    

\newrobustcmd{\interomegafwdunravI}[1]{\interfwdunrav{\interI}{\protect\vv{\omega}}{#1}}   
\newrobustcmd{\interomegafwdunravInames}{\interomegafwdunravI{\names}}  
\newrobustcmd{\interomegafwdunravIindA}{\interomegafwdunravI{\indA}}

\newrobustcmd{\DeltaomegafwdunravI}[1]{\Deltafwdunrav{\interI}{\protect\vv{\omega}}{#1}}   
\newrobustcmd{\DeltaomegafwdunravInames}{\DeltaomegafwdunravI{\names}}  
\newrobustcmd{\DeltaomegafwdunravIindA}{\DeltaomegafwdunravI{\indA}}

\newrobustcmd{\cdotomegafwdunravI}[1]{\cdotfwdunrav{\interI}{\protect\vv{\omega}}{#1}}     
\newrobustcmd{\cdotomegafwdunravInames}{\cdotomegafwdunravI{\names}}  
\newrobustcmd{\cdotomegafwdunravIindA}{\cdotomegafwdunravI{\indA}}

%


\newcommand{\splittingof}[2]{\Pi_{#1}^{#2}}                 
\newcommand{\splittingofq}[1]{\splittingof{\queryq}{#1}}       
\newcommand{\splittingtrees}{\mathtt{Trees}}                    
\newcommand{\splittingroots}{\mathtt{Roots}}                    
\newcommand{\splittingithsubtree}[1]{\mathtt{SubTree}_{#1}}     
\newcommand{\splittingname}{\mathtt{name}}                      
\newcommand{\splittingrootof}{\mathtt{root}\text{-}\mathtt{of}} 

\newcommand{\spoil}{{\scaleobj{0.75}{\text{\faBolt}}}}
\newcommand{\superspoil}{{\spoil}^{\scaleobj{0.4}{\text{\faStar}}}}
\newcommand{\spoilerKBof}[1]{\kbK_{#1}^{\spoil}}

\newcommand{\superspoilerKBof}[1]{\kbK_{#1}^{\superspoil}}


\newcommand{\subtreeConcept}[2]{\concept{Subt}_{#1}^{#2}}  
\newcommand{\matchConcept}[1]{\concept{Match}_{#1}}        


\newcommand{\eliminateforkto}{\leadsto_{\mathsf{fe}}}        
\newcommand{\maximalforkrew}[1]{\mathsf{maxfr}{(#1)}}  
\newcommand{\QTree}{\mathsf{QTree}}
\newcommand{\Reach}{\mathsf{Reach}}
\newcommand{\SAT}{\mathsf{SAT}}


\title{Lutz's Spoiler Technique Revisited: A Unified Approach to Worst-Case Optimal Entailment of Unions of Conjunctive Queries\\ in Locally-Forward Description Logics}

\author{Bartosz Bednarczyk$^{1,2}$ \orcid{0000-0002-8267-7554}}
\date{$^{1}$Computational Logic Group, Technische Universit{\"a}t Dresden, Germany\\
$^{2}$Institute of Computer Science, University of Wroc{\l}aw, Poland\\
\texttt{bartosz.bednarczyk@$\{$cs.uni.wroc.pl, tu-dresden.de$\}$}}

\begin{document}
\maketitle


\begin{abstract}
We present a unified approach to (both finite and unrestricted) worst-case optimal entailment of (unions of) conjunctive queries (U)CQs in the wide class of ``locally-forward'' description logics. 
The main technique that we employ is a generalisation of Lutz's spoiler technique, originally developed for CQ entailment in $\ALCHQ$.
Our result closes numerous gaps present in the literature, most notably implying $\ExpTime$-completeness of UCQ-querying for any superlogic of $\ALC$ contained in $\ALCHbregQ$, and, as we believe, is abstract enough to be employed as a black-box in many new scenarios.
\end{abstract}

\section{Preliminaries}\label{sec:prelim}
  We recall the basics on description logics (DLs)~\cite{dlbook} and query answering~\cite{OrtizS12}.

\paragraph*{DLs.}\label{subsec:prelim-dls}
  We fix countably infinite pairwise disjoint sets of \emph{individual names} \(\Ilang \), \emph{concept names} \(\Clang \), and \emph{role names}~\(\Rlang \) and introduce a description logic \( \ALCcap \).
  Starting from \( \Clang \) and \( \Rlang \), the set \( \alccapconcepts \) of \( \ALCcap \) \emph{concepts} is built using the following concept constructors: \emph{negation} \((\neg \conceptC) \), \emph{conjunction} \((\conceptC \dland \conceptD) \), \emph{existential restriction} (\(\exists{(\roler_1 \cap \ldots \cap \roler_n)}.\conceptC \)) and the \emph{bottom concept} (\( \botconcept \)), with the grammar:
  \begin{equation*} \label{eq:alc-grammar}
  \conceptC, \conceptD \; ::= \; \botconcept \; \mid \; \conceptA \; \mid \; \neg \conceptC \; \mid \; \conceptC \dland \conceptD \; \mid \; \exists{(\roler_1 \cap \ldots \cap \roler_n)}.\conceptC,
  \end{equation*}
  where \(\conceptC,\conceptD \in \alccapconcepts \), \(\conceptA \in \Clang \) and \(\roler \in \Rlang \). 
  We often employ disjunction \(\conceptC \dlor \conceptD \deff \neg (\neg \conceptC  \dland \neg \conceptD) \), universal restrictions \(\forall{(\roler_1 \cap \ldots \cap \roler_n)}.\conceptC \deff \neg \exists{(\roler_1 \cap \ldots \cap \roler_n)}.\neg\conceptC \), top \(\topconcept \deff \neg \botconcept \), and the ``inline-implication''~$\conceptC \to \conceptD \deff \neg \conceptC \dlor \conceptD$.

  \emph{Assertions} are of the form \(\conceptC(\indva) \) or \(\roler(\indva,\indvb) \) for \(\indva,\indvb \in \Ilang \), \(\conceptC \in \alccapconcepts \) and \(\roler \in \Rlang \).
  A \emph{general concept inclusion} (GCI) has the form \(\conceptC \dlsubseteq \conceptD \) for concepts \(\conceptC, \conceptD \in \alccapconcepts \). 
  We use $\conceptC \equiv \conceptD$ as a shorthand for the two GCIs \(\conceptC \dlsubseteq \conceptD \) and \(\conceptD \dlsubseteq \conceptC \).
  A~\emph{knowledge base} (KB) \(\kbK=(\aboxA, \tboxT) \) is composed of a finite non-empty set \(\aboxA \) (\emph{ABox}) of assertions and a finite non-empty set \(\tboxT \) (\emph{TBox}) of GCIs. 
  We call the elements of \(\aboxA \cup \tboxT \) \emph{axioms}. 
  The set of all individual names appearing in \( \kbK \) is denoted with \( \indK \). 

  \begin{table}[!htb]
    \begin{minipage}{.65\linewidth}
      \caption{Concepts and roles in \(\ALCcap \).\label{tab:ALCcap}}
      \centering
          \begin{tabular}{@{}l@{\ \ \ }c@{\ \ \ }l@{}}
              \hline\\[-2ex]
              Name & Syntax & Semantics \\ \hline \\[-2ex]
              bottom concept & \( \botconcept \) & \( \emptyset  \) \\
              conc.\ negation & \(\neg\conceptC \)& \(\DeltaI \setminus \conceptC^{\interI} \) \\  
              conc.\ intersection & \(\conceptC \dland \conceptD \)& \(\conceptC^{\interI}\cap \conceptD^{\interI} \) \\  
              exist.\ restriction & \(\exists{(\roler_1 \cap \ldots \cap \roler_n)}.\conceptC \) & 
              \(\set{ \domelemd \mid \exists{\domeleme}.(\domelemd,\domeleme)\in \bigcap\limits_{i = 1}^{n} \roler_i^{\interI} \land \domeleme\in \conceptC^{\interI}} \)
              \\\hline
          \end{tabular}
    \end{minipage}%
    \begin{minipage}{.35\linewidth}
      \caption{Axioms in \( \ALCcap \).\label{tab:axm}} 
      \centering
          \begin{tabular}{ l l }
              \hline\\[-2ex]
              Axiom \(\alpha \) & \(\interI \models\alpha \), if \\ \hline \\[-2ex] 
              \(\conceptC \dlsubseteq \conceptD \) & \(\conceptC^{\interI} \subseteq \conceptD^{\interI}  \)\hspace{5ex} \mbox{TBox}~\(\tboxT \) \\ \hline \\[-2ex]
              \(\conceptC(\indva) \) & \(\indva^{\interI} \in \conceptC^{\interI}  \)\hfill \mbox{ABox}\(~\aboxA \) \\
              \(\roler(\indva,\indvb) \) & \((\indva^{\interI}, \indvb^{\interI} )\in \roler^{\interI}  \)\\
              \(\neg\roler(\indva,\indvb) \) & \((\indva^{\interI}, \indvb^{\interI} )\not\in \roler^{\interI}  \)             
              \\\hline
          \end{tabular}
    \end{minipage} 
  \end{table}
  The semantics of \(\ALCcap \) is defined via \emph{interpretations} \(\interI = (\DeltaI, \cdotI) \) composed of a non-empty set \(\DeltaI \) called the \emph{domain of \(\interI \)} and an \emph{interpretation function} \(\cdotI \) mapping individual names to elements of \(\DeltaI \), concept names to subsets of \(\DeltaI \), and role names to subsets of \(\DeltaI \times \DeltaI \). 
  This mapping is extended to concepts (see~\cref{tab:ALCcap}) and finally used to define \emph{satisfaction} of assertions and GCIs (see~\cref{tab:axm}). 
  \emph{Structures} are interpretations with a partial assignment of individual names.
  We say that an interpretation \(\interI \) \emph{satisfies} a KB \(\kbK=(\aboxA,\tboxT) \) (or \(\interI \) is a \emph{model} of \(\kbK \), written: \(\interI \models \kbK \)) if it satisfies all axioms of~\(\aboxA\cup\tboxT \). 
  An interpretation $\interI$ is \emph{finite} (resp. countable) iff its domain $\DeltaI$ is finite (resp. countable).
  A KB is (finitely) \emph{consistent} (or (finitely) \emph{satisfiable}) if it has a (finite) model and (finitely) \emph{inconsistent} (or (finitely) \emph{unsatisfiable}) otherwise. 

  Given a set of individual names \(\names \subseteq \Ilang \) we denote with \( \DeltaInamednames \) the set of \emph{\(\names \)-named domain elements} of~\(\interI \), \ie{} the set of all \(\domelemd \in \DeltaI \) for which \(\domelemd = \indva^{\interI} \) holds for some name~\(\indva \in \names \). 
  The elements from its complement, namely from \(\DeltaI \setminus \DeltaInamednames \), are called \(\names \)-\emph{anonymous}.

  The presented notions are straightforwardly lifted to any description logic \( \abstrDL \) semantically extending \( \ALCcap \) and allowing for polynomial expressivity of $\ALCcap$ concepts.
  Throughout the paper, such logics will be called \newnotion{abstract expressive description logics} or simply \newnotion{abstract~DLs}.\footnote{
  We have decided not to formally define what a \emph{semantic extension} of $\ALCcap$ is, suggesting that this notion should rather be understood naively.
  Promising examples of \emph{abstract DLs} are well-known DLs like \(\ALC^{\cap}, \ALCOIQ^{\cap}, \SHQ^{\cap}, \Z, \muALC^{\cap} \) etc.
  Of course, the notion of \emph{abstract DLs} can be formalised by means of abstract model theory, see e.g.~\cite[Sec.~1.2]{PiroPhD12}.
  }

\paragraph*{A bit of graph theory.}\label{subsec:graph-theory}
  We revisit the classical notions of substructures, paths and connectivity.
  Let \( \interI \) be an interpretation. 
  The \newnotion{restriction} of \( \interI \) to a set \( S \subseteq \DeltaI \), is the structure \(\restr{\interI}{S} \) defined by:
  \begin{equation*}
    \label{eq:restriction}
    \DeltaInter{\restr{\interI}{S}} = S, \;
    \roler^{\restr{\interI}{S}} = \roler^{\interI} \cap (S \times S), \;
    \conceptA^{\restr{\interI}{S}} = \conceptA^{\interI} \cap S, \;
    \indva^{\restr{\interI}{S}} = \indva^{\interI} 
    \; \text{if} \; \indva^{\interI} \in S \; \text{otherwise} \; \indva^{\restr{\interI}{S}} \; \text{is undefined,} 
  \end{equation*}
  for all  \(\conceptA \in \Clang \), \(\roler \in \Rlang \) and \(\indva \in \Ilang \).
  A \newnotion{substructure} of \( \interI \) is any of its restrictions~\(\restr{\interI}{S} \) for any \(S \subseteq \DeltaI \).

  The notion of paths is introduced next. 
  An \newnotion{undirected path} (resp. a \newnotion{directed path}) of length \( k{-}1 \) in \(\interI \) is a word~\(\pathrho = \pathrho_1\pathrho_2\ldots\pathrho_k \in (\DeltaI)^{+} \) such that for any index \( i < k \) we have \((\pathrho_i, \pathrho_{i+1}) \in \roler^{\interI} \cup (\roler^{\interI})^{-1} \) for some role name~\( \roler \in \Rlang \) (or just \((\pathrho_i, \pathrho_{i+1}) \in \roler^{\interI} \) in the directed case).
  An element \( \domeleme \in \DeltaI \) is \newnotion{reachable} from \( \domelemd \in \DeltaI \) via an  (un)directed path if there exists an (un)directed path  \( \pathrho = \pathrho_1\pathrho_2\ldots\pathrho_k \) in \( \interI \) with \( \pathrho_1 = \domelemd \) and \( \pathrho_k = \domeleme \).
  We say that \( \interI \) is \newnotion{connected} if any of its domain elements are reachable from any other via an undirected path.
  A structure \( \interJ \) is a \newnotion{connected component} of \( \interI \) if it is a maximal connected substructure of~\( \interI \).
  For any number $k \geq 0$ we define the \emph{$k$-neighbourhood} of $\domelemd$ in $\interI$, denoted with~$\Nbhood{\interI}{k}{\domelemd}$, as the restriction of $\interI$ to elements reachable from $\domelemd$ in $\interI$ by \emph{undirected} paths of length $\leq k$.

  Given a set \(\datadomD \), we say that a structure \(\interI \) is a \(\datadomD \)-\newnotion{forward-forest}, if \(\DeltaI \) is a prefix-closed subset of~\(\datadomD^+ \) and for all \(\roler \in \Rlang \), if \((\domelemd, \domeleme) \in \roler^{\interI} \) then either \(\domelemd,\domeleme \in \datadomD \) or \(\domeleme = \domelemd \cdot \domelemc \) for some \(\domelemc \in \datadomD \).
  The elements of~\(\DeltaI \cap \datadomD \) are called the \newnotion{roots} of \(\interI \). 
  We call \(\interI \) a \(\datadomD \)-\newnotion{forward-tree} if it is a connected \(\datadomD \)-forward-forest with a unique root. 
  We omit the set~\(\datadomD \) and the adjective ``forward'' in the naming whenever it is known from the context or unimportant.
  An interpretation is \emph{forward-tree-shaped} if it is a \(\datadomD \)-forward-tree for some $\datadomD$.

  When working with forests it is convenient to employ the tailored terminology, borrowed from graph theory.
  Given a  \(\datadomD \)-forward-forest \( \interI \) we define an ordering  \( (\DeltaI, \isancestoreq) \) on it in such a way that \( \domelemd \isancestoreq \domeleme \) holds iff \( \domelemd \) is a prefix of \( \domeleme \) and use the following naming scheme:
  \begin{itemize}\itemsep0em
    \item If \( \domelemd \isancestor \domeleme \) holds then \( \domelemd \) is an \newnotion{ancestor} of \( \domeleme \) or, alternatively, \( \domeleme \) is a descendant of \( \domelemd \).
    \item If \( \domelemd_1 \isancestor \domelemd_2 \) but there is no \( \domeleme \) such that \( \domelemd_1 \isancestor \domeleme \isancestor \domelemd_2 \) we call \( \domelemd_1 \) a \emph{parent} of \( \domelemd_2 \) or, alternatively, that \( \domelemd_2 \) is a \emph{child} of \( \domelemd_1 \). Note that it implies that there exists a value \( \domelemc \in \datadomD \) such that \( \domelemd_2 = \domelemd_1 \domelemc \).
    %
    %
    %
    \item The \( \isancestor \)-maximal elements are called \newnotion{leaves}.
    \item Given \( \domelemd \in \DeltaI \) we denote the set of its children and its ancestors, respectively, with \( \childrenof{\interI}{\domelemd} \) and \( \ancestorsof{\interI}{\domelemd} \). We also define the \newnotion{subtree} rooted at \( \domelemd \), denoted: \( \subtreeofrootedat{\interI}{\domelemd} \), \ie{} the restriction of \( \interI \) to the set \( \set{\domelemd} \cup \ancestorsof{\interI}{\domelemd} \).
  \end{itemize}

  To conclude the section, we lift the notion of ``being a forest'' to models of knowledge bases.
  Take a set of individual names \(\names \subseteq \Ilang \). We say that a forward forest \(\interI \) is \newnotion{\( \names \)-rooted} whenever:
  \begin{itemize}\itemsep0em
      \item for all names \(\indva \in \names \) we have that \(\indva^{\interI} \) is defined and it is a root of \(\interI \) and
      \item for each root \(\domelemd \in \DeltaI \) there is a name \(\indva \in \names \) satisfying \(\domelemd = \indva^{\interI} \).
  \end{itemize}
  A \newnotion{forward forest model} of a knowledge base \(\kbK = (\aboxA, \tboxT) \) is an \(\indA \)-rooted forest satisfying \(\kbK \). 
  Abstract DLs $\abstrDL$ for which it is true that every satisfiable $\abstrDL$-KB $\kbK$ has a forward forest model, are said to possess the \emph{forward-forest-model property} (FFMP).
  A prominent example of such a logic is $\ALCcap$.

\paragraph*{Morphisms.}\label{subsec:prelim-morphisms}
  Let \(\interI, \interJ \) be structures and let \(\names \subseteq \Ilang \). 
  An \(\names \)-\newnotion{homomorphism} \(\homof:\interI\to\interJ \) is a function that:
  \begin{itemize}\itemsep0em
      \item maps \(\DeltaI \) to \(\DeltaJ \),
      \item preserves individual names from \(\names \), \ie{} for all \(\indva \in \names \) if \(\indva^{\interI} \) 
      is defined then \(\indva^{\interJ} = \homof(\indva^{\interI}) \),
      \item preserves atomic concepts, \ie{} \(\domelemd \in \conceptA^{\interI} \) implies \(\homof(\domelemd) \in \conceptA^{\interJ} \) for all \(\conceptA \in \Clang \),
      \item and preserves atomic roles, \ie{} \((\domelemd, \domeleme) \in \roler^{\interI} \) implies~\(\left(\homof(\domelemd), \homof(\domeleme)\right) \in \roler^{\interJ} \) for all \(\roler \in \Rlang \). 
  \end{itemize}


\paragraph*{Queries.}\label{subsec:prelim-queries}
  Queries employ \emph{variables} from a countably infinite set \(\Vlang \). 
  A \emph{conjunctive query} (CQ) is a conjunction of \emph{atoms} of the form \(\roler(\varx, \vary) \) or \(\conceptA(\varz) \), where \(\roler \) is a role name, \(\conceptA \) is a concept name and \(\varx, \vary, \varz \) are variables. 
  More expressive query languages are also considered: a \emph{union of conjunctive queries} (UCQ) is a disjunction of CQs and a \emph{positive existential query} (PEQ) is a positive boolean combination of CQs.\footnote{PEQs are generated with the following grammar: $\queryq ::= \conceptA(\varx) \mid \roler(\varx, \vary) \mid \queryq \land \queryq \mid \queryq \lor \queryq$.} 
  Note that any PEQ can be converted to a UCQ of (possibly) exponential size by turning it into disjunctive normal form.

  Let \(\queryq \) be a PEQ and let \(\interI \) be a structure.
  The set of variables appearing in \(\queryq \) is denoted with \(\queryVar{\queryq} \) and the number of atoms of \(\queryq \) (\ie{} the size of \( \queryq \)) is denoted with \(|\queryq| \). 
  The fact that \(\roler(\varx,\vary) \) appears in \(\queryq \) is indicated with~\(\roler(\varx,\vary) \in \queryq \).
  Whenever some subset \( V \subseteq \queryVar{\queryq} \) is given, let \( \restr{\queryq}{V} \) denote the sub-query of \( \queryq \) where all the atoms containing any variable outside \( V \) are removed.
  
  Let \(\matchpi:\queryVar{\queryq}\to\DeltaI \) be a \emph{variable assignment}. 
  We write \(\interI \modelsmatch{\matchpi} \roler(\varx,\vary) \) if~\((\matchpi(\varx),\matchpi(\vary))\in \roler^\interI \) and \(\interI \modelsmatch{\matchpi} \conceptA(\varz) \) if~\(\matchpi(\varz) \in \conceptA^\interI \). 
  Similarly, we write~\(\interI \modelsmatch{\matchpi} \queryq_1 \land \queryq_2 \) (resp. \(\interI \modelsmatch{\matchpi} \queryq_1 \lor \queryq_2 \)) iff \(\interI \modelsmatch{\matchpi} \queryq_1 \) and (resp. or) \(\interI \modelsmatch{\matchpi} \queryq_2 \), for queries \(\queryq_1, \queryq_2 \). 
  We say that~\(\matchpi \) is a \emph{match} for \(\interI \) and \(\queryq \) if \(\interI \modelsmatch{\matchpi} \queryq \) holds and that \(\interI \) \emph{satisfies} \(\queryq \) (denoted with: \(\interI \models \queryq \)) whenever \(\interI \modelsmatch{\matchpi} \queryq \) for some match \(\matchpi \). 
  The definitions are lifted to knowledge bases: \(\queryq \) is \emph{(finitely) entailed} by a~knowledge base \(\kbK \) (written:~\(\kbK \modelsoptfin \queryq \)) if every (finite) model \(\interI \) of~\( \kbK \) satisfies~\(\queryq \). 
  We stress that the entailment relations \(\models \) and~\(\modelsfin \) may not coincide.
  When \(\interI \models \kbK \) but \(\interI \not\models \queryq \), we call \(\interI \) a \emph{countermodel} for \(\kbK \) and \(\queryq \). 
  Note that \(\queryq \) is (finitely) entailed by \(\kbK \) if there is no (finite) countermodel for \(\kbK \) and~\(\queryq \).

  Observe that a \emph{conjunctive query} \(\queryq \) can be seen as a structure \(\interIqueryq = (\queryVar{\queryq}, \cdotIqueryq) \), having the interpretation of roles and concepts fixed as \(\conceptA^{\interIqueryq} = \set{ \varx \mid \conceptA(\varx) \in \queryq} \) and \(\roler^{\interIqueryq} = \set{ (\varx, \vary) \mid \roler(\varx, \vary) \in \queryq} \) for all~\(\conceptA \in \Clang \) and \(\roler \in \Rlang \) and with \(\indva^{\interIqueryq} \) undefined for all \(\indva \in \Ilang \). 
  Hence, any match \(\matchpi \) for \(\interI \) and CQ \(\queryq \) can be seen as an \(\Ilang \)-homomorphism from \(\interIqueryq \) to \(\interI \). 
  We say that a CQ \(\queryq \) is forward-tree-shaped whenever \(\interIqueryq \) is forward-tree-shaped.

\paragraph*{Decision problems.}\label{subsec:prelim-decision-problems}
  For a given description logic \( \abstrDL \) we consider the classical decision problems, namely the (finite) \emph{satisfiability problem} and the (finite) CQ/UCQ/PEQ \emph{entailment problem}. 
  The former asks if an input knowledge base has a (finite) model, while in the latter asks if an input CQ/UCQ/PEQ is (finitely) entailed by an input knowledge base. 
  Here we mention a few results on \(\ALC \) and sister logics. 
  It is well-known that \(\ALC \) has the \emph{finite model property}~\cite[Thm.~3.10]{Gradel99}, \ie{} the satisfiability and the finite satisfiability problems coincide.
  Moreover, \(\ALC \) is \emph{finitely controllable}~\cite[Thm.~1.2]{BaranyGO13} that is, any UCQ is entailed by an \(\ALC \) knowledge base iff it is finitely entailed.
  These two results rely on the fact that \(\ALC \) can be encoded~\cite[Ch.~2.6.1]{dlbook} in the so-called \emph{guarded fragment of first-order logic} \(\GF \)~\cite{AndrekaNB98}.
  Regarding the complexity results, the satisfiability problem~\cite[Thm.~6]{GiacomoL96} and the CQ-entailment problem~\cite[Thm.~1]{LutzDL08} for $\ALC$ (and even $\ALCHQ$) are \(\ExpTime \)-complete, while the PEQ-entailment problem for $\ALC$ was recently shown to be \(\TwoExpTime \)-hard~\cite[Thm.~1]{OrtizS14}. 
  The \( \TwoExpTime \) upper bound can be obtained even for very expressive extensions of \(\ALC \) and regular queries extending PEQs~\cite[Thm.~5.23]{CalvaneseEO14}.
  The UCQ entailment problem for $\ALCH$ is known to be $\ExpTime$-complete~\cite[Thm.~6.5.1]{OrtizPhD10}, while the exact complexity of UCQ-querying for many logics, including $\ALCQ$, is still unknown.
  The absence of such results is even more intriguing in the light of the existing \(\TwoExpTime \)-hardness proofs of~CQ entailment for \(\ALCO \)~\cite[Thm.~9]{NgoOS16}, \(\ALCI \)~\cite[Thm.~2]{Lutz07} and~$\ALCSelf$~\cite[Thm.~8.2]{BednarczykR21}, \ie{} the extensions of \(\ALC \) with \emph{nominals}, \emph{inverses of roles} or \emph{self-loops}. 


\section{Query entailment in locally-forward description logics}\label{sec:query-entailment-for-forward-logics}

In this section, we provide a worst-case optimal algorithm for solving (U)CQ entailment problem for the class of locally-forward abstract DLs.
We first define what locally-forward abstract DLs actually are.

\begin{definition}[locally-forward-forest-like]\label{def:locally-looks-like-forest}
For an $n \in \N$ and an $\names \subseteq \Ilang$ we say that an interpretation $\interI$ is \emph{$(n,\names)$-locally-forward-forest-like} (short: is \emph{$(n,\names)$-lff-like}) iff for any $\domelemd \in \DeltaI$ the $n$-neighbourhood $\Nbhood{\interI}{n}{\domelemd}$ is either forward-tree-shaped or is an $\names'$-rooted forward forest with $\names' = \{ \indva \in \names \mid \indva^{\Nbhood{\interI}{n}{\domelemd}} \; \text{is defined} \}$.
\end{definition}

Locally-forward-forest-like structures are next used as ``coverings'' of (finite) interpretations.
The property below is analogous to the quasi-forest homomorphism-cover property from~\cite[p.~8]{BourhisKR14}.

\begin{definition}[coverable by lffs]\label{def:covered-by-locally-forward-forest-like-structures}
Let $\abstrDL$ be an abstract DL and $\kbK$ be an $\abstrDL$-KB.
We say that $\kbK$ is (finitely) \emph{coverable by locally-forward-forest-like structures} (short: lff-coverable) iff for any (finite) model $\interI \models \kbK$ and every $n \in \N$ there is a (finite) \emph{$(n,\indK)$-lff-like} model $\interJ \models \kbK$ that \emph{covers} $\interI$, \ie any $n$-neighbourhood of $\interJ$ can be $\indK$-homomorphically-mapped to~$\interI$. 
\end{definition}

Finally we employ coverings and lff-like interpretations to define locally-forward DLs.
\begin{definition}\label{def:locally-forward-abstract-dl}
An abstract DL $\abstrDL$ is said to be (finitary) \emph{locally-forward} iff all (finitely) satisfiable $\abstrDL$-KBs are (finitely) lff-coverable.
\end{definition}

From the fact that the set of models for any PEQ is closed under homomorphism it follows that:
\begin{fact}\label{fact:only-lff-like-countermodels-matters}
For any (finitary) locally-forward abstract DL $\abstrDL$, any (finitely) satisfiable $\abstrDL$-KB $\kbK$ and any UCQ $\queryq = \textstyle \bigvee_{i=1}^{m} \queryq_i$, we have that if $\kbK \not\models \queryq$ then there is a (finite) $(|\queryq|,\indK)$-lff-like countermodel for $\kbK$ and $\queryq$.
\end{fact}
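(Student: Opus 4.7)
The plan is to contrapose. Assume $\kbK \not\models \queryq$ and pick a (finite) countermodel $\interI$, \ie\ a (finite) $\interI \models \kbK$ with $\interI \not\models \queryq$. Since $\kbK$ is (finitely) satisfiable and $\abstrDL$ is (finitary) locally-forward, $\kbK$ is (finitely) lff-coverable, so we obtain a (finite) $(|\queryq|,\indK)$-lff-like $\interJ \models \kbK$ such that every $|\queryq|$-neighbourhood of $\interJ$ admits an $\indK$-homomorphism into $\interI$. It remains to verify $\interJ \not\models \queryq$, which would make $\interJ$ the desired countermodel.

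Suppose towards a contradiction that $\interJ \modelsmatch{\matchpi} \queryq_i$ for some disjunct $\queryq_i$ and match $\matchpi$. Partition $\queryq_i$ into its connected components $\queryq_i^{(1)}, \ldots, \queryq_i^{(k)}$, where ``connected'' refers to the graph on $\queryVar{\queryq_i}$ whose edges are the binary atoms of $\queryq_i$. Within a fixed component $\queryq_i^{(j)}$, every pair of variables is linked by an atom-path of length at most $|\queryq_i^{(j)}| \leq |\queryq|$; hence, picking any variable $\varx$ of the component, the image $\matchpi(\queryVar{\queryq_i^{(j)}})$ is contained in the domain of $\Nbhood{\interJ}{|\queryq|}{\matchpi(\varx)}$.

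By the covering property, each such neighbourhood admits an $\indK$-homomorphism $\homof_j$ into $\interI$, and closure of PEQs under homomorphisms ensures that $\homof_j \circ \restr{\matchpi}{\queryVar{\queryq_i^{(j)}}}$ is a match of $\queryq_i^{(j)}$ in $\interI$; note that individual-name preservation plays no role here since $\interIqueryq$ carries no named elements. Because distinct connected components of a CQ have pairwise disjoint variable sets, the partial matches glue into a single match witnessing $\interI \models \queryq_i$, and therefore $\interI \models \queryq$, a contradiction. The only mild subtlety is precisely this gluing step for disconnected queries: the independent neighbourhoods around different component centers need not share a common homomorphism into $\interI$, but this is harmless because the components share no variables.
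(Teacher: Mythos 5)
Your proof is correct and follows essentially the same route as the paper: obtain the covering $\interJ$, show that any hypothetical match of some disjunct $\queryq_i$ in $\interJ$ breaks into pieces each contained in a $|\queryq|$-neighbourhood, push each piece into $\interI$ via the per-neighbourhood $\indK$-homomorphism (using homomorphism-closure of CQs), and glue the resulting partial matches. The only difference is the choice of decomposition: you split $\queryq_i$ along the connected components of its query graph and bound the diameter directly by the atom count, whereas the paper takes connected components of the image $\restr{\interJ}{\matchpi(\queryVar{\queryq_i})}$ as a substructure of $\interJ$ and bounds their size; both yield the statement, and yours is if anything the cleaner variant since the components and the diameter bound $\leq |\queryq|$ are visible from the query alone, independently of $\interJ$.
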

\begin{proof}
Let $\interI$ be a countermodel for $\kbK$ and $\queryq$. 
By the fact that $\kbK$ is lff-coverable we infer the existence of $(|\queryq|,\indK)$-lff-like model for $\kbK$ and $\queryq$ that covers $\interI$.
We claim that $\interJ$ is the desired lff-like countermodel $\interJ$ for $\kbK$ and $\queryq$. 
Indeed, if we would have $\interJ \models \queryq$ then $\interJ \modelsmatch{\matchpi} \queryq_i$ (for some index $1 \leq i \leq m$ and a match $\matchpi$).
Then the connected components of $\restr{\interJ}{\{ \matchpi(\varx) \mid \varx \in \queryVar{\queryq_i} \}}$ are of size $\leq |\queryq|$ and hence, can be homomorphically mapped to $\interI$ by the assumption.
This implies $\interI \models \queryq_i$, and therefore $\interI \models \queryq$, contradicting the countermodelhood of $\interI$.
\end{proof}

One can easily provide multiple examples of locally-forward abstract DLs.
It is easy to check that any logic~$\abstrDL$ extending $\ALCcap$ and having the forward-forest-countermodel property is immediately locally-forward, \ie for any~$\kbK$, a guaranteed forward-forest countermodel $\interI$ of $\kbK$ and a (U)CQ of size $n$ is $(n, \indK)$-lff-like.
By inspecting the proof of~\cite[Lemma 3.2.13]{OrtizPhD10} one can see that any abstract DL $\abstrDL$ contained in $\ALCHbregQ$ has such a property and thus, is locally-forward. A more direct proof will be provided in the full version of this paper.
An example of finitary locally-forward abstract DL is $\ALCSCC$~\cite[Lemmas 14--20]{BaaderBR20}.

\subsection{An informal explanation of the Lutz's spoiler technique}\label{subsec:informal-spoilers}

We start by giving a rather informal explanation of Lutz's spoiler technique, dedicated to the readers that are not familiar with the original work of Lutz on querying $\ALCHQ$~\cite[Sec. 3]{LutzDL08}.\footnote{In his paper, Lutz works with $\SHQ$, an extension of $\ALCHQ$ with transitive roles, but he doesn't allow for transitive roles in queries. This is crucial since their presence makes CQ entailment problem exponentially-harder~\cite[Thm. 1]{EiterLOS09}. Hence, from the query point of view, Lutz's work is rather about querying $\ALCHQ$.}
Most of the forthcoming notions are very similar to those from~\cite{LutzDL08} and actually we aimed at reusing as much material from~\cite{LutzDL08} as possible. 
However, many of our statements require separate proofs in order to make them logic-independent and to adjust the proof to locally-forest-like structures.

Recall that our goal is to decide, given an finitely lff-coverable $\abstrDL$-KB $\kbK$ and a conjunctive query $\queryq$, whether $\kbK \modelsoptfin \queryq$ holds, which boils down to checking if there is a (finite or arbitrary, depending on the problem) countermodel for $\kbK$ and $\queryq$.
Due to~\cref{fact:only-lff-like-countermodels-matters} we can restrict our attention to $(n,\indK)$-lff-like interpretations.
An important observation is that a match $\matchpi$ of $\queryq$ over an $(|\queryq|, \indK)$-lff-like $\interI$ induces a very specific partition of $\queryVarq$, namely $\matchpi$ divides the variables of $\queryq$ into three disjoint categories: (i) the variables mapped to the $\names$-named elements of $\interI$, (ii) the variables forming a forward subtree ``dangling'' from one of the $\names$-named elements of $\interI$ and (iii) the variables forming forward-trees that lie ``far'' from $\names$-named elements.
The notion of a \emph{splitting} abstractly describes such a partition, independently of the choice of $\matchpi$ and $\interI$.
The existence of a splitting \emph{compatible} with a $(|\queryq|, \indK)$-lff-like $\interI$ implies that $\interI \models \queryq$ holds and vice-versa.
Hence, to show $\kbK \not\models \queryq$, it suffices to find a $(|\queryq|, \indK)$-lff-like model $\interI^{\spoil}$ of $\kbK$ such that no splitting is compatible with it, or, in other words, that $\interI^{\spoil}$ \emph{spoils} all the splittings.

Next, for a splitting $\splittingofq{}$ of $\queryq$ we design an $\abstrDL$-KB $\spoilerKBof{\splittingofq{}}$, called a \emph{spoiler} for $\splittingofq{}$ with the intended meaning that every $(|\queryq|, \indK)$-locally-forward-forest-like model of $\kbK \cup \spoilerKBof{\splittingofq{}}$ \emph{spoils} its compatibility with~$\splittingofq{}$.
The construction of spoilers employs, among other ingredients, the well-known \emph{rolling-up technique}~\cite[Sec.~4]{HorrocksT00} that is used to detect forward-tree-shaped query matches from points (ii)--(iii) above (the name of the technique comes from the fact that we traverse an input forward-tree in a bottom-up manner and gradually ``rolling-up'' its forward subtrees into predicates, until the root is reached). 
This is the only reason why we require that $\abstrDL$ polynomially encodes $\ALCcap$ concepts.
Having the splittings defined, we observe that (finite) $(|\queryq|, \indK)$-lff-like models of $\kbK \cup \textstyle \bigcup_{\splittingofq{}}\spoilerKBof{\splittingofq{}}$ are also (finite) countermodels for $\kbK$ and $\queryq$.

This yields decidability, but with a suboptimal complexity when the (finite) satisfiability problem for $\abstrDL$  is $\ExpTime$-complete.
To get the optimal (exponential) upper bound in such case, we parallelise the construction of $\bigcup_{\splittingofq{}}\spoilerKBof{\splittingofq{}}$.
This means, intuitively, that the KB $\bigcup_{\splittingofq{}}\spoilerKBof{\splittingofq{}}$ is divided into exponentially many chunks called \emph{super-spoilers} $\superspoilerKBof{\queryq}$ with the meaning that $\kbK \not\modelsoptfin \queryq$ iff $\kbK \cup \superspoilerKBof{\queryq}$ has a (finite) $(|\queryq|, \indK)$-lff-like model for some super-spoiler $\superspoilerKBof{\queryq}$.
We then show that each super-spoiler is of polynomial size and the set of super-spoiler can be enumerated in exponential time. 
This gives us a Turing reduction from the (finite) query entailment problem to exponentially many (finite) satisfiability checks of polynomial-size $\abstrDL$-KBs, which yields an optimal complexity.

\subsection{Step I: Rolling-up forward-tree-shaped queries}\label{subsec:spoiler-technique-rolling-up}

We next recall the well-known rolling-up technique~\cite[p. 5]{LutzDL08} of transforming forward-tree-shaped queries into \( \ALCcap \)-concepts.
Our goal is to construct, for every \( \varx \in \queryVarq \), a concept \( \subtreeConcept{\queryq}{\varx} \) stating that \( \domelemd \in (\subtreeConcept{\queryq}{\varx})^{\interI} \) holds whenever the subtree of \( \interIqueryq \) rooted at the variable \( \varx \) can be mapped below \( \domelemd \) in \( \interI \) (made more formal in~\cref{lemma:rolling-up-trees-induction}). 
A formal, inductive definition is given next. 
The main idea behind the definition is to traverse the input tree in a bottom-up manner, describing its shape with \( \ALCcap \) concepts, and gradually ``rolling-up'' the input forward-tree into smaller chunks until the root is~reached.

\begin{definition}\label{def:rolling-up-concepts} 
    For a forward-tree-shaped CQ \( \queryq \) and any of its variables \( \varv \in \queryVarq \) we define an $\ALCcap$-concept \newnotion{\( \subtreeConcept{\queryq}{\varv} \)} as:
    \[
    \subtreeConcept{\queryq}{\varv} \deff 
      \bigdland_{\conceptA(\varv) \in \queryq} \conceptA \; \; \dland \; \; \bigdland_{\varu \in \Childrenofvar{\varv} } \; \exists \left( \bigcap_{\roler(\varv, \varu) \in \queryq} \roler \right) \subtreeConcept{\queryq}{\varu},
    \]
    where the empty conjunction equals $\top$.
    We set \newnotion{\( \matchConcept{\queryq} \)} as an abbreviation of \( \subtreeConcept{\queryq}{\varv_r} \) with \( \varv_r \) being the root of~$\interIqueryq$.
\end{definition}

From the presented construction we can easily estimate the size (\ie the number of sub-concepts) of~\( \matchConcept{\queryq} \).
Note that the size of  \( \matchConcept{\queryq} \)  is linear in \( |\queryq| \) since every query atom contributes to exactly one sub-concept of  \( \matchConcept{\queryq} \).
The following lemma is folklore and can be shown by routine induction over $(\queryVarq, \isancestoreq)$.
\begin{lemma}\label{lemma:rolling-up-trees-induction}
For any interpretation \( \interI \), any forward-tree-shaped CQ \( \queryq \) and any of its variables \( \varv \in \queryVarq \), the following equivalence holds:  \( \domelemd \in (\subtreeConcept{\queryq}{\varv})^{\interI} \) iff there exists a homomorphism \( \homoh : \subtreeofrootedat{\interI}{\varv}_{\queryq} \to \interI \) with \( \homoh(\varv) = \domelemd \).
\end{lemma}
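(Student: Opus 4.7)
The plan is to prove both directions of the equivalence simultaneously by well-founded induction on $(\queryVarq, \isancestoreq)$, moving from the leaves of $\interIqueryq$ upwards towards its root.

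First I would dispatch the base case, where $\varv$ is a leaf. Here $\Childrenofvar{\varv} = \emptyset$, so~\cref{def:rolling-up-concepts} collapses $\subtreeConcept{\queryq}{\varv}$ to $\bigdland_{\conceptA(\varv) \in \queryq} \conceptA$, while the subtree of $\interIqueryq$ rooted at $\varv$ consists of $\varv$ alone (with its concept atoms). Both sides of the equivalence then reduce to the same statement: $\domelemd \in \conceptA^{\interI}$ for every $\conceptA(\varv) \in \queryq$.

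The inductive step is where the real work lies, but remains entirely syntactic. Unfolding $\subtreeConcept{\queryq}{\varv}$ via~\cref{tab:ALCcap} gives that $\domelemd \in (\subtreeConcept{\queryq}{\varv})^{\interI}$ iff (a) $\domelemd \in \conceptA^{\interI}$ for all $\conceptA(\varv) \in \queryq$, and (b) for every $\varu \in \Childrenofvar{\varv}$ there exists a witness $\domeleme_\varu \in \DeltaI$ with $(\domelemd,\domeleme_\varu) \in \roler^{\interI}$ for every $\roler$ with $\roler(\varv,\varu) \in \queryq$, and moreover $\domeleme_\varu \in (\subtreeConcept{\queryq}{\varu})^{\interI}$. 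By the inductive hypothesis, the last clause is equivalent to the existence of a homomorphism $\homoh_\varu$ from the subtree of $\interIqueryq$ rooted at $\varu$ satisfying $\homoh_\varu(\varu) = \domeleme_\varu$. I would then glue these together by setting $\homoh(\varv) \deff \domelemd$ and $\homoh \deff \homoh_\varu$ on the subtree rooted at each $\varu$; the subtrees of distinct children have disjoint variable sets, so $\homoh$ is well defined. Forward-tree-shapedness of $\queryq$ ensures that every role atom of $\queryq$ touching $\varv$ goes from $\varv$ to one of its children, and every concept atom on $\varv$ is handled by (a), so no further atoms need to be verified. The converse is dual: given a homomorphism $\homoh$ with $\homoh(\varv) = \domelemd$, the restrictions of $\homoh$ to the subtrees below each child, together with the witnesses $\domeleme_\varu \deff \homoh(\varu)$, recover both (a) and (b) via the inductive hypothesis.

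The only conceptual point to watch is the semantics of the intersection-of-roles constructor $\exists(\roler_1 \cap \ldots \cap \roler_n).\conceptC$: one needs a \emph{single} witness $\domeleme_\varu$ reached simultaneously by every $\roler$ with $\roler(\varv,\varu) \in \queryq$, which is exactly what a homomorphism preserves on the parent--child edge of $\interIqueryq$. This is precisely why the construction employs $\ALCcap$ rather than plain $\ALC$; beyond this observation the induction is entirely routine and I do not expect any genuine obstacle.
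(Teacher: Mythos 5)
Your proposal is correct and is exactly the ``routine induction over $(\queryVarq, \isancestoreq)$'' that the paper invokes without spelling out; the paper marks the lemma as folklore and gives no explicit proof. Your identification of the role-intersection constructor as the one point where the single shared witness $\domeleme_\varu$ matters is the right thing to highlight, and the gluing argument for the converse direction closes both sides of the induction cleanly.
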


By unravelling the definition of \( \matchConcept{\queryq} \) and by applying \cref{lemma:rolling-up-trees-induction} for the root variable of \( \queryq \),  we obtain:
\begin{corollary}\label{corr:rolling-up-trees-full}
For any interpretation \( \interI \) and a forward-tree-shaped conjunctive query \( \queryq \) we have \( (\matchConcept{\queryq})^{\interI} \neq \emptyset \) iff there exists a homomorphism \( \homoh : \interIqueryq \to \interI \). 
\end{corollary}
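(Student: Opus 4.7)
The plan is to read off the corollary directly from \cref{lemma:rolling-up-trees-induction} by specialising it to the root variable of the query. Since $\queryq$ is forward-tree-shaped, the structure $\interIqueryq$ is a forward tree with some unique root variable $\varv_r$, and by definition the subtree of $\interIqueryq$ rooted at $\varv_r$ equals $\interIqueryq$ itself. Moreover, unfolding the abbreviation introduced in \cref{def:rolling-up-concepts}, we have $\matchConcept{\queryq} = \subtreeConcept{\queryq}{\varv_r}$.

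For the forward direction, assume $(\matchConcept{\queryq})^{\interI} \neq \emptyset$ and pick any $\domelemd \in (\subtreeConcept{\queryq}{\varv_r})^{\interI}$. Applying \cref{lemma:rolling-up-trees-induction} with $\varv \deff \varv_r$ yields a homomorphism $\homoh : \subtreeofrootedat{\interIqueryq}{\varv_r} \to \interI$ with $\homoh(\varv_r) = \domelemd$, and since $\subtreeofrootedat{\interIqueryq}{\varv_r} = \interIqueryq$ this $\homoh$ is already the desired homomorphism from $\interIqueryq$ to $\interI$. For the reverse direction, if $\homoh : \interIqueryq \to \interI$ is any homomorphism, then setting $\domelemd \deff \homoh(\varv_r)$ the same $\homoh$ witnesses the right-hand side of \cref{lemma:rolling-up-trees-induction} for $\varv_r$, so $\domelemd \in (\subtreeConcept{\queryq}{\varv_r})^{\interI} = (\matchConcept{\queryq})^{\interI}$ and thus $(\matchConcept{\queryq})^{\interI} \neq \emptyset$.

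There is essentially no obstacle to overcome here: both implications reduce to a single invocation of the previous lemma at the root, together with the two observations that $\matchConcept{\queryq}$ is, by definition, $\subtreeConcept{\queryq}{\varv_r}$ and that the subtree of a rooted forward tree at its root is the whole tree. The only delicate point worth mentioning is that the choice of $\domelemd$ is immaterial: in the forward direction any element of $(\matchConcept{\queryq})^{\interI}$ works, and in the reverse direction the image of $\varv_r$ under an arbitrary homomorphism supplies a witness. Consequently the proof is a one-line corollary and can be written simply as: ``Immediate from \cref{lemma:rolling-up-trees-induction} applied to $\varv \deff \varv_r$, together with $\matchConcept{\queryq} = \subtreeConcept{\queryq}{\varv_r}$ and $\subtreeofrootedat{\interIqueryq}{\varv_r} = \interIqueryq$.''
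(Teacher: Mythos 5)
Your proof is correct and follows exactly the route the paper sketches: the corollary is read as the root-variable instance of \cref{lemma:rolling-up-trees-induction}, combined with the observations that $\matchConcept{\queryq} = \subtreeConcept{\queryq}{\varv_r}$ by \cref{def:rolling-up-concepts} and that the subtree of $\interIqueryq$ rooted at $\varv_r$ is $\interIqueryq$ itself. The paper simply states this as an immediate consequence before the corollary, so your write-up matches the paper's own argument in substance and in spirit.
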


Unfortunately, the presented method of detecting query matches works only for forward-tree-shaped queries.
To detect matches of arbitrary CQs, we introduce the notions of fork rewritings and splittings.

\subsection{Step II: Fork rewritings}\label{subsec:spoiler-technique-fork-rewritings}

Observe that a conjunctive query can induce several different query matches, depending on how its variables ``glue'' together.
We formalise this concept with the forthcoming notion of fork rewritings~\cite[p. 4]{LutzDL08}. 

\begin{definition}\label{def:fork-stuff}
    Let \( \queryq, \queryq' \) be CQs. 
    We say that \( \queryq' \) is obtained from \( \queryq \) by \newnotion{fork elimination}, and denote this fact with \( \queryq \eliminateforkto \queryq' \), if~\( \queryq' \) can be obtained from \( \queryq \) by selecting two atoms \( \roler(\vary, \varx) \), \( \roles(\varz, \varx) \)  of \( \queryq \) (where \( \roler \) and \( \roles \) are not necessary different) and identifying the variables \( \vary, \varz \). 
    We also say that \( \queryq' \) is a  \newnotion{fork rewriting} of \( \queryq \) if \( \queryq' \) is obtained from \( \queryq \) by applying fork elimination on \( \queryq \) possibly multiple times.
    When the fork elimination process is applied exhaustively on \( \queryq \) we say that the resulting query, denoted with \( \maximalforkrew{\queryq} \), is the \newnotion{maximal fork rewriting} of \( \queryq \). 
\end{definition}

The proof of the following~\cref{lemma:unique-maximal-fork-rew} can be found in Appendix A of the technical report for~\cite{LutzDL08}.
\begin{lemma}[Lemma 1 of~\cite{LutzDL08}]\label{lemma:unique-maximal-fork-rew}
    For any conjunctive query \( \queryq \) there exists its (up to a variable renaming) unique maximal fork rewriting \( \maximalforkrew{\queryq} \).
\end{lemma}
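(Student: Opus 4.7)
The plan is to split the proof of \cref{lemma:unique-maximal-fork-rew} into two parts: \emph{termination} of the $\eliminateforkto$-rewriting system (from which existence of a maximal fork rewriting follows) and \emph{uniqueness of the normal form up to variable renaming}. For termination, each fork elimination step takes two distinct variables $\vary, \varz$ of the current query and identifies them, thereby strictly decreasing the number of variables by one. Since $|\queryVarq|$ is finite to begin with, every $\eliminateforkto$-sequence has length at most $|\queryVarq|-1$ and thus hits a fork-free query in finitely many steps; this fork-free query is a maximal fork rewriting.

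For uniqueness I would use the following \emph{equivalence-relation} characterisation of $\maximalforkrew{\queryq}$. Define $\approx$ to be the smallest equivalence relation on $\queryVarq$ such that, whenever two atoms $\roler_1(\vary_1,\varx_1), \roler_2(\vary_2,\varx_2) \in \queryq$ satisfy $\varx_1 \approx \varx_2$, also $\vary_1 \approx \vary_2$ holds. Intuitively $\approx$ captures exactly which variables \emph{must} be identified in any maximal fork rewriting. The claim I would prove is that, up to renaming, $\maximalforkrew{\queryq}$ is the syntactic quotient $\queryq/\!\approx$, obtained by replacing every variable by a chosen representative of its $\approx$-class. One direction is an easy induction on the length of an $\eliminateforkto$-derivation: every single step identifies two $\approx$-equivalent variables (the triggering fork atoms immediately witness $\approx$-equivalence of their first arguments), and a careful invariance lemma shows that the $\approx$-relation of the one-step rewrite $\queryq'$ is precisely the pushforward of the $\approx$-relation of $\queryq$ along the variable-quotient map. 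The other direction is contrapositive: if a supposed normal form still contains two $\approx$-equivalent but distinct variables, then unfolding the closure defining $\approx$ yields a chain of fork-witnessing atom pairs, at least one of which must survive in the current query and thus contradicts being fork-free.

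A fully alternative approach is to invoke Newman's lemma (termination plus local confluence implies confluence, hence uniqueness of the normal form). For local confluence one performs a case analysis on two competing elimination steps $\queryq \eliminateforkto \queryq_1$ and $\queryq \eliminateforkto \queryq_2$: when the two triggering atom pairs involve disjoint first-argument variables the two steps commute, and when they share a first-argument variable one or two further elimination steps on each side yield a common descendant.

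The main obstacle I anticipate is the correct handling of \emph{fresh forks} created by an elimination: after merging $\vary$ and $\varz$, all atoms with target $\varz$ migrate to having target $\vary$, and together with the pre-existing incoming atoms of $\vary$ they can form new forks that were not present in the original $\queryq$. In the equivalence-relation approach this is absorbed cleanly into the closure definition of $\approx$ but requires a careful invariance lemma; in the Newman approach the same phenomenon is what forces ``joining'' derivations of length greater than one on one or both sides of a critical pair. Either way, once this point is settled, termination plus the preservation/closure argument forces every maximal fork rewriting to agree with $\queryq/\!\approx$ up to variable renaming.
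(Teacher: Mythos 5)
The paper does not prove this lemma: it cites Lemma~1 of~\cite{LutzDL08} and points to Appendix~A of the corresponding technical report, so there is no in-text argument to compare against. Your blind proposal is a correct, self-contained route. Termination by strict decrease of $|\queryVarq|$ is exactly right. For uniqueness, both routes work and both hinge on exactly the obstacle you name. In the equivalence-relation approach, the content of the invariance lemma is that after $\queryq \eliminateforkto \queryq'$ merging $\vary,\varz$, the relation $\approx_{\queryq'}$ is the pushforward of $\approx_{\queryq}$ along the quotient map; transitivity of the pushforward and its closure under the $\queryq'$-fork rule both require that $\vary \approx_{\queryq} \varz$ holds (witnessed by the triggering fork in $\queryq$), which makes any two preimages of a $\queryq'$-variable $\approx_{\queryq}$-equivalent. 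Once that is in place you get $\maximalforkrew{\queryq} \cong \queryq/\!\approx$, and since a fork-free query has $\approx$ equal to the identity relation, the contrapositive direction follows directly without needing the atom-chain argument you outline. In the Newman route you must additionally work on $\alpha$-equivalence classes of queries (otherwise which of the two merged names survives makes two normal forms formally distinct), and the overlapping critical pair $\{\vary_1,\varz_1\}\cap\{\vary_2,\varz_2\}\neq\emptyset$ indeed needs the extra one- or two-step joining you anticipate. Of the two, the equivalence-relation version is preferable: it gives an explicit, derivation-independent description of $\maximalforkrew{\queryq}$, not merely a uniqueness statement.
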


To get a better understanding on how the fork elimination works, consult the example below.

\begin{example}\label{exmpl:fork-elimination-example}
    Consider a conjunctive query \( \queryq = \roler(\varx, \vary) \land \roler(\varx, \varz) \land \roles(\varv, \vary) \land \roler(\varv, \varz) \land \conceptA(\varx) \land \conceptB(\vary) \land \conceptC(\varz) \land \conceptD(\varv)\). 
    By~applying fork elimination for variables \( \varx \) and \( \varv \) we obtain the maximal fork rewriting of \( \queryq \), \ie{} the conjunctive query \( \maximalforkrew{\queryq} = \roler(\varx\varv, \vary) \land \roles(\varx\varv, \vary) \land \roler(\varx\varv, \varz) \land \conceptB(\vary) \land \conceptA(\varx\varv) \land \conceptD(\varx\varv) \land \conceptC(\varz) \), with \( \varx\varv \) being a fresh variable.  

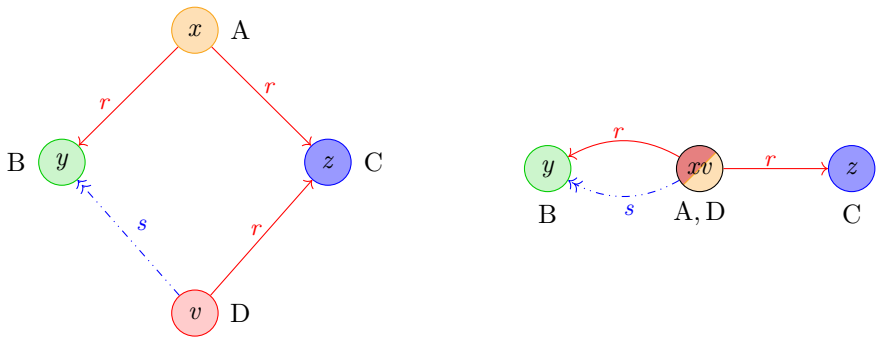
\begin{figure}[H]
          \begin{subfigure}[b]{0.5\textwidth}
            \centering
            \begin{tikzpicture}[transform shape]
                \draw (0, 0) node[ptrond, jaune, label=center:\( \varx \)] (Varx) {\phantom{0}};
                \node[] at (0.6, 0) {\( \conceptA \)};

                \draw (-1.75, -1.75) node[ptrond, vert, label=center: \( \vary \) ] (Vary) {\phantom{0}};
                \node[] at (-2.35, -1.75) {\( \conceptB \)};

                \draw (1.75, -1.75) node[ptrond, bleu, label=center: \( \varz \) ] (Varz) {\phantom{0}};
                \node[] at (2.35, -1.75) {\( \conceptC \)};

                \draw (0, -3.75) node[ptrond, rouge, label=center:\( \varv \)] (Varv) {\phantom{0}};
                \node[] at (0.6, -3.75) {\( \conceptD \)};

                \path[->] (Varx) edge [red] node[yshift=-3, xshift=-9] {\small\(\roler \)} (Vary);
                \path[->] (Varx) edge [red] node[yshift=3, xshift=3] {\small\(\roler \)} (Varz);
                \path[->>,dashdotdotted] (Varv) edge [blue] node[yshift=5, xshift=5] {\small\(\roles \)} (Vary);
                \path[->] (Varv) edge [red] node[yshift=3, xshift=-2] {\small\(\roler \)} (Varz);

            \end{tikzpicture}
          \end{subfigure}
          \hfill
          \begin{subfigure}[b]{0.5\textwidth}
            \begin{tikzpicture}[transform shape]
                \draw (0, 0) node[ptrond, vert, label=center:\( \vary \)] (Vary) {\phantom{0}};
                \node[] at (0, -0.6) {\( \conceptB \)};

                \draw (2, 0) node[ptrond, diagonal fill={jaune}{rouge!50}, label=center: \( \varx\varv \) ] (Varxv) {\phantom{0}};
                \node[] at (2, -0.6) {\( \conceptA, \conceptD \)};

                \draw (4, 0) node[ptrond, bleu, label=center: \( \varz \) ] (Varz) {\phantom{0}};
                \node[] at (4, -0.6) {\( \conceptC \)};

                \path[->] (Varxv) edge [red, bend right=30] node[yshift=3, xshift=-2] {\small\(\roler \)} (Vary);
                \path[->>,dashdotdotted] (Varxv) edge [bend left=30, blue] node[yshift=-5, xshift=2] {\small\(\roles \)} (Vary);
                \path[->] (Varxv) edge [red] node[yshift=3, xshift=-2] {\small\(\roler \)} (Varz);

                \node[] at (0.8, -2) {\phantom{0}};

            \end{tikzpicture}
          \end{subfigure}
          \caption{An example conjunctive query (LHS) and its maximal fork rewriting (RHS).}
\end{figure}
\end{example}

A rather immediate application of Definition~\ref{def:fork-stuff} is that an entailment of a fork rewriting of a query implies the entailment of the input query itself.
The proof goes via an induction over the number of fork eliminations.
\begin{lemma}\label{lemma:entailment-of-fork-rewriting-implies-entailment-of-a-query}
    Let \( \queryq, \queryq' \) be conjunctive queries, such that \( \queryq' \) is obtained from \( \queryq \) by fork rewriting, and let \( \interI \) be a structure. 
    Then \( \interI \models \queryq' \) implies \( \interI \models \queryq \).
\end{lemma}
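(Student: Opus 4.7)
The plan is a straightforward induction on the number of fork elimination steps used to obtain $\queryq'$ from $\queryq$. The base case (zero steps) is trivial since $\queryq'=\queryq$. For the inductive step, it suffices by the transitivity of fork rewriting to handle a single application of $\eliminateforkto$: if I can show that $\interI \models \queryq''$ implies $\interI \models \queryq$ whenever $\queryq \eliminateforkto \queryq''$, then repeated application together with the induction hypothesis yields the general statement.

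So fix a single fork elimination step in which atoms $\roler(\vary,\varx)$ and $\roles(\varz,\varx)$ of $\queryq$ are selected and the variables $\vary,\varz$ are identified into a fresh variable, call it $\varw$. Let $\matchpi''$ be a match witnessing $\interI \modelsmatch{\matchpi''}\queryq''$. I will construct a match $\matchpi$ for $\queryq$ by setting $\matchpi(\varu) \deff \matchpi''(\varu)$ for every variable $\varu \in \queryVarq \setminus \{\vary,\varz\}$, and $\matchpi(\vary) \deff \matchpi(\varz) \deff \matchpi''(\varw)$. In other words, I ``un-merge'' the identification by sending both preimages $\vary$ and $\varz$ to the common image of $\varw$ under $\matchpi''$.

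It remains to verify that every atom of $\queryq$ is satisfied by $\matchpi$ in $\interI$. Each atom $\queryatom$ of $\queryq$ is transformed into a corresponding atom $\queryatom''$ of $\queryq''$ by replacing each occurrence of $\vary$ or $\varz$ with $\varw$, and by construction $\matchpi(\varu) = \matchpi''(\varu)$ for $\varu \notin \{\vary,\varz\}$ and $\matchpi(\vary) = \matchpi(\varz) = \matchpi''(\varw)$, so the tuple of images of the arguments of $\queryatom$ under $\matchpi$ coincides with the tuple of images of the arguments of $\queryatom''$ under $\matchpi''$. Since $\interI \modelsmatch{\matchpi''} \queryatom''$, we get $\interI \modelsmatch{\matchpi} \queryatom$. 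Hence $\interI \models \queryq$.

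There is no real obstacle here: the direction of the implication is the ``easy'' one, because identification of variables can only lose information, so a match of the more constrained query $\queryq''$ lifts trivially to a match of $\queryq$ by duplicating the image. The only cosmetic care needed is to name the fresh merged variable (I used $\varw$) and to observe that the same argument works regardless of whether the two selected atoms share their role name, since the construction only manipulates variables, not predicates.
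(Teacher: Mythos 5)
Your proof is correct and takes essentially the same approach as the paper: an induction on the number of fork-elimination steps, reducing to a single step, and then lifting a match for the rewritten query back to the original by sending both merged variables to the common image. The paper phrases the single-step lift as composing the given homomorphism $\homoh_i$ with the collapsing map $\homof : \interIquery{\queryq_{i+1}} \to \interIquery{\queryq_i}$; your element-wise construction of $\matchpi$ is exactly this composition written out explicitly.
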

\begin{proof}
    Assume \( \interI \models \queryq' \).
    Since \( \queryq' \) is a fork rewriting of \( \queryq \), there exists a derivation \( \queryq_n {=} \queryq \eliminateforkto \queryq_{n-1} \eliminateforkto \ldots \eliminateforkto \queryq_0 {=} \queryq' \).
    Reasoning inductively, it suffices to show that for all indices \( 0 \leq i < n \) we have that  \( \interI \models \queryq_i \) implies  \( \interI \models \queryq_{i+1} \). 
    Then we conclude the lemma by taking \( i \deff n-1 \). 
    Assume \( \interI \models \queryq_i \), \ie{}  that there is a homomorphism \( \homoh_i: \interIquery{\queryq_i} \to \interI \).
    Since \( \queryq_{i+1} \eliminateforkto \queryq_i \) holds, we can find the variables \( \varx, \vary, \varz \) such that (i) \( \queryVar{\queryq_i} \setminus \set{\varx, \vary, \varz} = \queryVar{\queryq_{i+1}} \setminus \set{\varx, \vary, \varz} \) and (ii)~\( \queryq_i \) was obtained from \( \queryq_{i+1} \) by replacing each occurrence of \( \varx \) or \( \vary \) in any atoms with \( \varz \).
    Hence, let~\( \homof : \interIquery{\queryq_{i+1}}  \to \interIquery{\queryq_i} \) be a function satisfying \( \homof(\varx) = \homof(\vary) = \varz \) and \( \homof(\varv) = \varv \) for all other variables.
    From (i) and (ii) we immediately infer that \( \homof \) is a homomorphism.
    Thus \( (\homoh_i \circ \homof) : \interIquery{\queryq_{i+1}} \to \interI \) is a homomorphism, establishing \( \interI \models \queryq_{i+1} \).
\end{proof}

\subsection{Step III: Splittings}\label{subsec:spoiler-technique-splittings}
The next notion of splittings~\cite[p. 4]{LutzDL08} provides an abstract way to describe how a conjunctive query $\queryq$ matches a $(|\queryq|, \names)$-locally-forward-forest-like interpretation, while referring neither to a concrete interpretation nor to a concrete match.
Intuitively, the role of splittings is to partition the variables \( \varv \) of some fork rewriting $\queryq$ of the input query, depending on the three possible scenarios:
\begin{itemize}\itemsep0em
    \item either \( \varv \) is mapped to one of the $\names$-named elements,
    \item or \( \varv \), together with some other variables, constitute a subtree dangling from one of the $\names$-named elements,
    \item or \( \varv \) is mapped somewhere ``far'' inside the structure, \ie it is disconnected from the $\names$-named~elements.
\end{itemize}
These intuitions are formalised with a slight modification of the definitions for \( \ALCHQ \) from~\cite[p.~4]{LutzDL08}.

\begin{definition}\label{def:splitting-notion}
    Let \( \names \subseteq \Ilang \) and let \( \queryq \) be a conjunctive query. 
    An \emph{\( \names \)-splitting} \( \splittingofq{\names} \) of \( \queryq \) is a tuple
    \[ \splittingofq{\names} = \left( \splittingroots, \splittingname, \splittingithsubtree{1}, \splittingithsubtree{2}, \ldots, \splittingithsubtree{n}, \splittingrootof, \splittingtrees \right), \]
    where the sets \( \splittingroots, \splittingithsubtree{1}, \ldots, \splittingithsubtree{n}, \splittingtrees \) induce a partition of \( \queryVarq \), \( \splittingname : \splittingroots \to \names \) is a function naming the roots and \( \splittingrootof : \set{1,2,\ldots,n} \to \splittingroots \) assigns to each \( \splittingithsubtree{i} \) an element from \( \splittingroots \).
    Moreover, to be an $\names$-splitting, \( \splittingofq{\names} \) has to satisfy all the conditions below:
    \begin{enumerate}[(a)]\itemsep0em
    \item the query \( \restr{\queryq}{\splittingtrees} \) is a conjunction of variable-disjoint forward-tree-shaped queries,\label{item:a:def:splitting-notion}
    \item the queries \( \restr{\queryq}{\splittingithsubtree{i}} \) are forward-tree-shaped for all indices \( i \in \set{1,2,\ldots,n} \),\label{item:b:def:splitting-notion}
    \item for any atom \( \roler(\varx, \vary) \in \queryq \) the variables \( \varx, \vary \) either belong to the same set or there is an index \( i \in \set{1,2,\ldots,n} \) such that \( \splittingrootof(i) = \varx \in \splittingroots \)  and \( \vary \in \splittingithsubtree{i} \) is the root of \( \restr{\queryq}{\splittingithsubtree{i}} \),\label{item:c:def:splitting-notion}
    \item For any index \( i \in \set{1,2,\ldots,n} \) there is an atom \( \roler(\splittingrootof(i), \varx_{i})  \in \queryq \) with \( \varx_{i} \) being the root of \( \restr{\queryq}{\splittingithsubtree{i}} \).\label{item:d:def:splitting-notion}
    \end{enumerate}
\end{definition}

It helps to think that a splitting consists of named roots, corresponding to the ABox part of the model, together with some of their subtrees and of some auxiliary forward-trees lying somewhere detached from the~roots.

\begin{example}\label{exmpl:splittings}
    Consider an \( \{\indva, \indvb, \indvc\} \)-rooted forward forest \( \interI \) and a (non-tree-shaped) CQ \( \queryq \): 
    \begin{multline*}
        \queryq = \left( \conceptA(\varx_0) \land \roler(\varx_0, \varx_1) \land \roler(\varx_1, \varx_0) \land \conceptB(\varx_1) \right) 
        \land \left( \roles(\varx_0, \varx_{00}) \land \roler(\varx_{00}, \varx_{000}) \right)\\
        \land \left( \roler(\varx_0, \varx_{01}) \land \roles(\varx_{01}, \varx_{010}) \land \roler(\varx_{010}, \varx_{0100}) \right)
        \land \left( \conceptA(\varx_{200}) \land \roler(\varx_{200}, \varx_{2001}) \land \conceptB(\varx_{2001}) \right).
    \end{multline*}
    Then a splitting \( \splittingofq{} = \left( \splittingroots, \splittingname, \splittingithsubtree{1}, \splittingithsubtree{2}, \splittingrootof, \splittingtrees \right) \) defined below is compatible with \( \interI \).

\begin{figure}[H]
    \begin{subfigure}[b]{0.5\textwidth}
    \centering
\begin{tikzpicture}[scale=0.6, transform shape]
  \draw (0,-2) node[ptrond, vert, label=center:\small{00}] (V00) {\phantom{0000}};
  \draw (-2,-4) node[ptrond, jaune, label=center:\small{000}] (V000) {\phantom{0000}};
  \draw (0,-4) node[ptrond, rouge, label=center:\small{001}] (V001) {\phantom{0000}};
  \draw (-2, -6) node[ptrond, vert, label=center:\small{0000}] (V0000) {\phantom{0000}};
  \draw (0, -6) node[ptrond, vert, label=center:\small{0010}] (V0010) {\phantom{0000}};

  \path[->] (V00) edge [red] node[xshift=-6] {\(\roler \)} (V000);
  \path[->] (V00) edge [red] node[xshift=6] {\(\roler \)} (V001);
  \path[->>,dashdotdotted] (V000) edge [blue] node[xshift=-4] {\(\roles \)} (V0000);
  \path[->>,dashdotdotted] (V001) edge [blue] node[xshift=4] {\(\roles \)} (V0010);

  \draw (4, 0) node[ptrond, vert, label=center:\small{0}] (V0) {\phantom{0000}};
  \node[] at (3.25, 0) {\(\indva \)};
  \node[] at (3.25, 0.5) {\(\conceptA \)};
  \path[->>,dashdotdotted] (V0) edge [blue] node[xshift=-10] {\(\roles \)} (V00);

  \draw (4, -2) node[ptrond, jaune, label=center:\small{01}] (V01) {\phantom{0000}};
  \draw (4, -4) node[ptrond, vert, label=center:\small{010}] (V010) {\phantom{0000}};
  \draw (3, -6) node[ptrond, jaune, label=center:\small{0100}] (V0100) {\phantom{0000}};
  \draw (5, -6) node[ptrond, rouge, label=center:\small{0101}] (V0101) {\phantom{0000}};

  \path[->] (V0) edge [red] node[xshift=-5] {\(\roler \)} (V01);
  \path[->>,dashdotdotted] (V01) edge [blue] node[xshift=4] {\(\roles \)} (V010);
  \path[->] (V010) edge [red] node[xshift=-4] {\(\roler \)} (V0100);
  \path[->] (V010) edge [red] node[xshift=4] {\(\roler \)} (V0101);

  \draw (9.25, 0) node[ptrond, vert, label=center:\small{2}] (V2) {\phantom{0000}};
  \node[] at (10, 0) {\(\indvc \)};

  \draw (9.25, -2) node[ptrond, jaune, label=center:\small{20}] (V20) {\phantom{0000}};
  \draw (9.25, -4) node[ptrond, vert, label=center:\small{200}] (V200) {\phantom{0000}};
  \node[] at (10.25, -4) {\( \conceptA \)};

  \draw (8.25, -6) node[ptrond, jaune, label=center:\small{2000}] (V2000) {\phantom{0000}};
  \draw (10.25, -6) node[ptrond, rouge, label=center:\small{2001}] (V2001) {\phantom{0000}};
  \node[] at (11.25, -6) {\( \conceptB \)};

  \path[->] (V2) edge [red] node[xshift=-5] {\(\roler \)} (V20);
  \path[->>,dashdotdotted] (V20) edge [blue] node[xshift=4] {\(\roles \)} (V200);
  \path[->] (V200) edge [red] node[xshift=-4] {\(\roler \)} (V2000);
  \path[->] (V200) edge [red] node[xshift=4] {\(\roler \)} (V2001);

  \draw (6.5, 1) node[ptrond, rouge, label=center:\small{1}] (V1) {\phantom{0000}};
  \node[] at (6.5, 1.75) {\(\indvb, \conceptB \)};

  \path[->>,dashdotdotted] (V1) edge [loop below, blue] node[xshift=-15, yshift=9] {\(\roles \)} (V1);
  \path[->] (V0) edge [bend left=30, red] node[yshift=4] {\small\(\roler \)} (V1);
  \path[->] (V1) edge [bend left=30, red] node[yshift=4.5,xshift=2] {\small\(\roler \)} (V0);
  \path[->] (V2) edge [bend left=30, red] node[yshift=4.5,xshift=2] {\small\(\roler \)} (V0);
  \path[->] (V2) edge [bend right=30, red] node[xshift=4, yshift=4] {\small\(\roler \)} (V1);
  \path[->>,dashdotdotted] (V1) edge [bend right=30, blue] node[xshift=4,yshift=4] {\small\(\roles \)} (V2);




\scoped[on background layer] \filldraw [red!50, draw opacity=0.2, fill opacity=0.2, line width=2.5em, line join=round,] (V0.center) -- (V1.center) -- cycle;
\scoped[on background layer] \filldraw [blue!10, line width=2.5em, line join=round,] (V01.center) -- (V010.center) -- (V0100.center) -- cycle;
\scoped[on background layer] \filldraw [blue!50, draw opacity=0.2, fill opacity=0.2, line width=2.5em, line join=round,] (V00.center) -- (V000.center) -- cycle;
\scoped[on background layer] \filldraw [green!80, draw opacity=0.2, fill opacity=0.2, line width=2.5em, line join=round,] (V200.center) -- (V2001.center) -- cycle;

\end{tikzpicture}
      \end{subfigure}
      \hfill
      \begin{subfigure}[b]{0.5\textwidth}
        \begin{tikzpicture}[transform shape]
            \node[] at (4,  4) {\( \splittingroots = \set{ \varx_0, \varx_1 } \)};
            \node[] at (4,  3.2) {\( \splittingithsubtree{1} = \set{\varx_{00}, \varx_{000}} \)};
            \node[] at (4, 2.4) {\( \splittingithsubtree{2} = \set{\varx_{01}, \varx_{010}, \varx_{0100}} \)};
            \node[] at (4, 1.6) {\( \splittingtrees = \set{\varx_{200}, \varx_{2001}} \)};
            \node[] at (4, 0.8) { \( \splittingname(\varx_0) = \indva, \splittingname(\varx_1) = \indvb \) };
            \node[] at (4, 0.0) { \( \splittingrootof(1) = \varx_0, \splittingrootof(2) = \varx_0 \) };
            \node[] at (0.0, -0.5) { \phantom{0} };
        \end{tikzpicture}
      \end{subfigure}
    \caption{An example splitting \( \splittingofq{} \) of \( \queryq \), compatible with \( \interI \). }
\end{figure}
\end{example}

We finish the section by showing that splittings indeed fulfil their purposes.
In order to do it, we first introduce an immediate definition of \emph{compatibility} of a splitting with a $(|\queryq|, \names)$-locally-forward-forest-like interpretation.

\begin{definition}\label{def:splittings-compatible}
    Let \( \names \subseteq \Ilang \), \( \queryq \) be a CQ and \( \interI \) be a $(|\queryq|, \names)$-locally-forward-forest-like interpretation.
    We say that an \( \names \)-splitting \( \splittingofq{\names} \) of \( \queryq \) is \newnotion{compatible} with \( \interI \) if it satisfies:
    \begin{enumerate}[(A)]\itemsep0em
        \item for every connected component \( \hat{\queryq} \) of \( \restr{\queryq'}{\splittingtrees} \) there is a domain element \( \domelemd \in \DeltaI \) satisfying \( \domelemd \in (\matchConcept{\hat{\queryq}})^{\interI} \),\label{item:A:def:splittings-compatible}
        \item for all atoms \( \conceptA(\varx) \in \queryq \) with \( \varx \in \splittingroots \) we have \( (\splittingname(\varx))^{\interI} \in \conceptA^{\interI} \),\label{item:B:def:splittings-compatible}
        \item for all atoms \( \roler(\varx, \vary) \in \queryq \) with  \( \varx, \vary \in \splittingroots \) we have \( \left( \splittingname(\varx)^{\interI}, \splittingname(\vary)^{\interI} \right) \in \roler^{\interI} \),\label{item:C:def:splittings-compatible}
        \item for all indices \( i \in \set{1,2,\ldots,n} \) the following property, for \(\varx_i\) being the root of \( \restr{\queryq}{\splittingithsubtree{i}}\), is satisfied:\label{item:D:def:splittings-compatible}
        \[ \splittingname(\splittingrootof(i))^{\interI} \in \left( \exists \left( \bigcap_{\roler(\splittingrootof(i), \varx_i) \in \queryq} \roler \right) \matchConcept{\restr{\queryq}{\splittingithsubtree{i}}} \right)^{\interI}. \]
    \end{enumerate}
\end{definition}

The forthcoming lemmas link together all the notions presented in this section. 
Its proof is similar to~\cite[Lemma~2]{LutzDL08}, but our version is arguably more detailed and uses a different kind of structures than Lutz's.

\begin{lemma}\label{lemma:splittings-linking-notions-together-from-splitting-to-match}
    Let \( \queryq \) be a CQ, \( \names \subseteq \Ilang \) and \( \interI \) be a (finite) $(|\queryq|, \names)$-lff-like interpretation.
    Then \( \interI \modelsoptfin \queryq \) if and only if there is a fork rewriting \( \queryq' \) of \( \queryq \) and an \( \names \)-splitting \( \splittingof{\queryq'}{\names} \) of \( \queryq' \), such that \( \splittingof{\queryq'}{\names} \) is compatible with \( \interI \).
\end{lemma}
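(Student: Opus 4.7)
The statement is an equivalence; I tackle the two directions separately. The direction $(\Leftarrow)$ is a synthesis problem: given the splitting, assemble a match. The direction $(\Rightarrow)$ is a decomposition problem: given a match, read off a splitting, after first massaging the query via fork elimination so that the match becomes sufficiently injective. The latter is where the $(|\queryq|,\names)$-lff-like hypothesis does all the work; the former essentially just invokes the rolling-up machinery from Step I plus \cref{lemma:entailment-of-fork-rewriting-implies-entailment-of-a-query}.

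For $(\Leftarrow)$, I would construct a match $\matchpi'$ of $\queryq'$ in $\interI$ by assembling it from the three parts of the partition. On $\splittingroots$, set $\matchpi'(\varv) \deff \splittingname(\varv)^{\interI}$; conditions \emph{(B)} and \emph{(C)} of \cref{def:splittings-compatible} then validate all atoms lying entirely inside $\splittingroots$. For each $\splittingithsubtree{i}$, condition \emph{(D)} supplies an $\interI$-witness of the rolling-up concept for $\restr{\queryq'}{\splittingithsubtree{i}}$, and \cref{corr:rolling-up-trees-full} (paired with the reading of the outer $\exists$-restriction) turns that witness into a homomorphism of $\restr{\queryq'}{\splittingithsubtree{i}}$ into $\interI$ anchored at $\splittingname(\splittingrootof(i))^{\interI}$, with condition \emph{(d)} of \cref{def:splitting-notion} certifying that the unique connecting atom between $\splittingrootof(i)$ and the root of $\splittingithsubtree{i}$ is handled by this step. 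For each connected component of $\restr{\queryq'}{\splittingtrees}$, condition \emph{(A)} and another application of \cref{corr:rolling-up-trees-full} deliver the remaining portion of $\matchpi'$. Condition \emph{(c)} of \cref{def:splitting-notion} forbids any atoms crossing between the partition parts besides those handled above, so $\matchpi'$ is indeed a match of $\queryq'$; invoking \cref{lemma:entailment-of-fork-rewriting-implies-entailment-of-a-query} then yields $\interI \models \queryq$.

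For $(\Rightarrow)$, I would start from an arbitrary match $\matchpi$ of $\queryq$ in $\interI$ and perform a match-guided sequence of fork eliminations, obtaining a fork rewriting $\queryq'$ together with an induced match $\matchpi'$ that is ``as injective as possible'' on variables mapped outside $\DeltaInamednames$. The pivotal structural observation is that whenever $\varx,\vary,\varz$ lie in a common connected component of $\queryq$, their $\matchpi$-images sit inside a single $|\queryq|$-neighbourhood of $\interI$, which by \cref{def:locally-looks-like-forest} is either forward-tree-shaped or $\names'$-rooted-forward-forest-shaped; in both cases every image element outside $\DeltaInamednames$ admits at most one $\interI$-predecessor inside that neighbourhood. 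Consequently, whenever $\roler(\vary,\varx), \roles(\varz,\varx) \in \queryq$ with $\matchpi(\varx) \notin \DeltaInamednames$ and $\vary \neq \varz$, the equality $\matchpi(\vary) = \matchpi(\varz)$ follows, so the fork elimination of $\vary$ and $\varz$ is admissible and preserves $\matchpi$'s matchability. Iterating to a fixpoint (termination is obvious, as each step drops one variable) yields $\queryq'$ and $\matchpi'$. The splitting is then read off directly: let $\splittingroots \deff \{\varv \in \queryVar{\queryq'} : \matchpi'(\varv) \in \DeltaInamednames\}$ with $\splittingname$ recovered from $\matchpi'$, and inspect the connected components of the restriction of $\queryq'$ to $\queryVar{\queryq'} \setminus \splittingroots$. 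By the structural observation above, each such component is a forward-tree rooted at a vertex with at most one $\splittingroots$-predecessor; if that predecessor exists, the component becomes some $\splittingithsubtree{i}$ with $\splittingrootof(i)$ set to that predecessor, and otherwise it joins $\splittingtrees$. Verifying conditions \emph{(a)}-\emph{(d)} of \cref{def:splitting-notion} and \emph{(A)}-\emph{(D)} of \cref{def:splittings-compatible} then reduces to unpacking definitions with $\matchpi'$ as the uniform witness, using the easy direction of \cref{corr:rolling-up-trees-full} (homomorphism implies concept membership) to cash out conditions \emph{(A)} and \emph{(D)}.

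The main obstacle is the structural lemma underlying the fork-elimination step in $(\Rightarrow)$, namely that an anonymous image element has a unique $\interI$-predecessor within its $|\queryq|$-neighbourhood. This requires a careful case split on \cref{def:locally-looks-like-forest}: in the forward-tree case it is immediate from the prefix-closed tree structure, while in the $\names'$-rooted-forward-forest case it holds precisely because being anonymous forbids being a root, leaving only the forest's intrinsic parent-pointers. Everything else is bookkeeping; in particular the (finite) variant with $\modelsfin$ comes for free, since $\interI$ is fixed throughout and never enlarged.
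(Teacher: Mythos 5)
Your argument follows the paper's proof: the $(\Leftarrow)$ direction assembles a match from the compatibility conditions via the rolling-up lemmas and then appeals to \cref{lemma:entailment-of-fork-rewriting-implies-entailment-of-a-query}, while the $(\Rightarrow)$ direction performs match-guided fork elimination and reads the splitting off the resulting match, with lff-likeness guaranteeing that anonymous image elements have a unique $\interI$-predecessor and hence that the anonymous pieces are forward-tree-shaped. The only noteworthy variation is your fork-elimination criterion: you collapse only forks aimed at an anonymously-mapped target (observing that their sources automatically share a $\matchpi$-image), whereas the paper collapses every fork whose two sources share a $\matchpi$-image; the difference is immaterial, since the extra forks removed by the paper lie entirely inside $\splittingroots$ and never obstruct the splitting construction.
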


We start with the ``if'' direction.

\begin{proof}[Proof ($\Leftarrow$)]
  By \cref{lemma:entailment-of-fork-rewriting-implies-entailment-of-a-query}, it suffices to show~\( \interI \models \queryq' \).
  We construct a function \( \homoh : \queryVar{\queryq'} \to \interI \) as follows:
  \begin{itemize}\itemsep0em
    \item For every root variable \( \varx \in \splittingroots \) we put \( \homoh(\varx) := (\splittingname(\varx))^{\interI}\). 
    \item Fix an index \( 1 \leq i \leq n \). By \cref{item:b:def:splitting-notion} of \cref{def:splitting-notion} we know that \( \restr{\queryq'}{\splittingithsubtree{i}} \) is forward-tree-shaped and let \( \varx_i \) be its root.
    Moreover, by \cref{item:D:def:splittings-compatible} of \cref{def:splittings-compatible} there exists an element \( \domelemd_i \in \DeltaI \) satisfying:
    \[ 
    \left( \splittingname(\splittingrootof(i))^{\interI}, \domelemd_i \right) \in \left( \bigcap_{\roler(\splittingrootof(i), \varx_i) \in \queryq'} \roler^{\interI} \right) \quad \text{and} \quad \domelemd_i \in \left( \matchConcept{\restr{\queryq'}{\splittingithsubtree{i}}} \right)^{\interI}. \label{eq:proof-of-lemma:splittings-linking-notions-together-case-2-defining-homo} \tag{\mbox{\(\spadesuit\)}} 
    \]
    From the forward-tree-shapedness of \( \restr{\queryq'}{\splittingithsubtree{i}} \) and \cref{lemma:rolling-up-trees-induction} we conclude the existence of a homomorphism \( \homoh_i \) from \( \interIquery{\restr{\queryq'}{\splittingithsubtree{i}}} \) to \( \interI \) with \( \homoh_i(\varx_i) = \domelemd_i \).
    Thus we can simply put \( \homoh(\varx) := \homoh_i(\varx) \) for all \( \varx \in \splittingithsubtree{i} \). 
    \item Take any connected component \( \hat{\queryq} \) of \( \restr{\queryq'}{\splittingtrees} \), which by \cref{item:a:def:splitting-notion} of \cref{def:splitting-notion} is forward-tree-shaped.
    From the compatibility of \( \splittingof{\queryq'}{\names} \) with \( \interI \) and \cref{item:A:def:splittings-compatible} of \cref{def:splittings-compatible} we know that there is an element \( \domelemd \in \DeltaI \) satisfying \( \domelemd \in (\matchConcept{\hat{\queryq}})^{\interI} \).
    Invoking \cref{corr:rolling-up-trees-full}, we deduce that there exists a homomorphism \( \homoh_{\hat{\queryq}} : \interIquery{\hat{\queryq}} \to \interI \).
    Finally, we put \( \homoh(\varx) := \homoh_{\hat{\queryq}}(\varx) \) for all \( \varx \in \queryVar{\hat{\queryq}} \).
  \end{itemize}
  Note that the definition of $\homoh$ is correct, \ie that every argument has a value assigned and that each argument has only one value assigned, since (i) the sets \( \splittingroots, \splittingithsubtree{1}, \ldots, \splittingithsubtree{n}, \splittingtrees \) induce a partition of \( \queryVarq \), (ii) that all forward-tree-shaped queries from \( \splittingtrees \) are variable-disjoint and (iii) the employed homomorphism are functions themselves. 
  Hence, it remains to show that \( \homoh \) is also a homomorphism from $\interIquery{\queryq'}$ to $\interI$.
  Proving the preservation of atomic concepts by \( \homoh \) is immediate: for root variables we employ \cref{item:B:def:splittings-compatible} of \cref{def:splittings-compatible}, while for the other variables we rely on the fact that the result of \( \homoh \) is then defined via an another homomorphism.
  For the proof of the preservation of roles by \( \homoh \), we take any~\( (\varx, \vary) \in \roler^{\interIquery{\queryq'}} \), or equivalently \( \roler(\varx,\vary) \in \queryq' \), and we going to show that \( (\homoh(\varx), \homoh(\vary)) \in \roler^{\interI}  \).
  By \cref{item:c:def:splitting-notion} of \cref{def:splitting-notion} we know that there are only four cases to consider, depending on the location of \( \varx\) and \( \vary \):
  \begin{itemize}\itemsep0em
      \item Both \(\varx \) and \( \vary \) belong to \( \splittingroots \).\\ 
      Then \( \left( \homoh(\varx), \homoh(\vary) \right) = \left( \splittingname(\varx)^{\interI}, \splittingname(\vary)^{\interI} \right) \in \roler^{\interI} \) follows from \cref{item:C:def:splittings-compatible} of \cref{def:splittings-compatible}.

      \item There exists an index \( 1 \leq i \leq n \) such that \( \varx, \vary \in \splittingithsubtree{i} \).\\
      Then \( (\varx, \vary) \in \roler^{\interIquery{\restr{\queryq'}{\splittingithsubtree{i}}}} \) holds and we get \( \left( \homoh(\varx), \homoh(\vary) \right) = \left( \homoh_i(\varx), \homoh_i(\vary) \right) \in \roler^{\interI} \) since \( \homoh_i \) is a homomorphism.

      \item Both \(\varx \) and \( \vary \) belong to \( \splittingtrees{} \).\\
      From \( (\varx, \vary) \in \roler^{\interIquery{\queryq'}} \) we know that \( \varx, \vary \) are in the same subtree \( \hat{\queryq} \) of \( \splittingtrees \).
      Thus, \( (\varx, \vary) \in \roler^{\interIquery{\hat{\queryq}}} \) holds and it suffices to apply the fact that \( \homoh_{\hat{\queryq}} \) is homomorphism to get \( \left( \homoh(\varx), \homoh(\vary) \right) = \left( \homoh_{\hat{\queryq}}(\varx), \homoh_{\hat{\queryq}}(\vary) \right) \in \roler^{\interI} \).
   
      \item The variables \( \varx \) and \( \vary \) are in two different sets.\\
      First, from \cref{item:c:def:splitting-notion} of \cref{def:splitting-notion}, we know that there is an \( i \) such that \( \varx \in \splittingroots \) satisfies \( \splittingrootof(i) = \varx \) and \( \vary = \varx_i \in \splittingithsubtree{i} \) is the root of \( \restr{\queryq}{\splittingithsubtree{i}} \).
      Second, by \cref{eq:proof-of-lemma:splittings-linking-notions-together-case-2-defining-homo} we know that~\( (\homoh(\varx), \homoh(\vary)) \) is actually equal to \( \left( \splittingname(\splittingrootof(i))^{\interI}, \domelemd_i \right) \) for some already-fixed \( \domelemd_i \in \DeltaI \). 
      Finally, by applying the first part of \cref{eq:proof-of-lemma:splittings-linking-notions-together-case-2-defining-homo}, we get \( (\homoh(\varx), \homoh(\vary)) = \left( \splittingname(\splittingrootof(i))^{\interI}, \domelemd_i \right)  \) belongs to \( \roler^{\interI}\), as required.
  \end{itemize}
  We have shown that the function \( \homoh \) is indeed a homomorphism.
  Thus \( \interI \models \queryq' \) holds, implying \( \interI \models \queryq \). \qedhere
\end{proof}

Next, we proceed with a more difficult ``only if'' direction of~\cref{lemma:splittings-linking-notions-together-from-splitting-to-match}.

\begin{proof}[Proof ($\Rightarrow$)]
  Let $\matchpi$ be the match witnessing $\interI \modelsmatch{\matchpi} \queryq$.
  We construct the query \( \queryq' \) by exhaustively applying fork elimination on all ``forks'' \( \roler(\varx, \varz), \roles(\vary, \varz)\) (where $\roler,\roles$ are not necessarily different) with \( \matchpi(\varx) = \matchpi(\vary) \).
  Note that then also $\interI \models \queryq'$ holds (a match $\matchpi'$ for $\queryq'$ can be easily constructed from $\matchpi$).
  In what follows we define an \( \names \)-splitting \[ \splittingof{\queryq'}{\names} = \left( \splittingroots, \splittingname, \splittingithsubtree{1}, \splittingithsubtree{2}, \ldots, \splittingithsubtree{n}, \splittingrootof, \splittingtrees \right),\] where the definitions of its components are provided below.
  \begin{itemize}\itemsep0em
      \item The set \( \splittingroots \) is composed of all variables \( \varx \in \queryVar{\queryq'} \) for which \( \matchpi'(\varx) \) is an $\names$-named element of $\interI$. 
      For all such variables $\varx$ we set \( \splittingname(\varx) = \indva \) for any corresponding \( \indva \in \names \). 
      \item The sets \( \splittingithsubtree{i} \), as their name suggests, are defined by taking subtrees connected to the roots.
      To simplify the definition, we say that a variable \( \varx \) is \emph{dangling from a root} if there exists a variable \( \varx_r \in \splittingroots \) and an atom \( \roler(\varx_r, \varx) \) in \( \queryq' \).
      Let \( D \) be the subset-maximal set of variables from \( \left( \queryVar{\queryq'} \setminus \splittingroots \right) \) dangling from roots.
      Take \( n \deff |D|\) and fix an ordering \( \varx_1, \varx_2, \ldots, \varx_n \) on the elements from \( D \).
      For any index \( 1 \leq i \leq n \) we define \( \splittingithsubtree{i} \) as the set composed of \( \varx_i \) and all variables reachable from \( \varx_i \) via a directed path of positive length in the query structure \(  \interIquery{\restr{\queryq'}{\queryVar{\queryq'} \setminus \splittingroots}} \).
      Observe that $\restr{\interI}{\{ \matchpi'(\varv) \mid \varv \in \splittingithsubtree{i} \}}$ is a forward-tree. 
      This follows from the fact that the $|\queryq|$-neighbourhood of $\matchpi'(\varx_i)$ in $\interI$ is either a forward-tree (and hence we are done) or it is an $\names'$-rooted forward-forest (then since $\matchpi'(\varx_i)$ is not $\names$-named it is an inner node of the forest and hence the nodes reachable from it constitute a forward-tree).

      Thus, due to the fact that we eliminated all forks, the underlying query $\restr{\queryq'}{\splittingithsubtree{i}}$ is forward-tree-shaped.

      \item We put \( \splittingrootof(i) \deff \varx_{r}^{i} \), where \( \varx_{r}^{i} \) is the root from which \( \varx_i \in D \) is dangling.\\ 
      Note that \( \varx_{r}^{i} \)  is uniquely determined due to the construction of \( \queryq' \).
      Indeed, ad absurdum assume that there is \( \vary_{r}^{i} \neq \varx_{r}^{i} \) such that~\( \roler(\varx_{r}^{i}, \varx_i) \) and \( \roles(\vary_{r}^{i}, \varx_i) \) holds.
      There are two cases: either \( \matchpi'(\varx_{r}^{i}) = \matchpi'(\vary_{r}^{i}) \) or \( \matchpi'(\varx_{r}^{i}) \neq \matchpi'(\vary_{r}^{i}) \).
      The former case is clearly not possible due to the fact that such ``forks'' were eliminated in \( \queryq' \).
      In the latter case it implies that there are two $\names$-named elements of \( \interI \) pointing at \( \matchpi'(\varx_i) \).
      Recall that \( \interI \) is a $(|\queryq|, \names)$-locally-forward-forest-like and hence, the local neighbourhood of $\matchpi'(\varx_i)$ is a forward-forest with at least two roots, $\matchpi'(\varx_{r}^{i})$ and $\matchpi'(\vary_{r}^{i})$. 
      Thus the latter case is only possible when \( \matchpi'(\varx_i)  \) is also $\names$-named, but it is not because \( \varx_i \not\in \splittingroots{} \). 
      A contradiction. 
      \item The set \( \splittingtrees \) contains all other variables from \( \queryVar{\queryq'} \).
  \end{itemize}

  Now we show that \( \splittingof{\queryq'}{\names} \) is indeed a splitting. 
  We have already argued that $\splittingname$ and $\splittingrootof$ is are functions and that \cref{item:b:def:splitting-notion} and \cref{item:d:def:splitting-notion} of \cref{def:splitting-notion} hold.
  It remains to prove that the selected sets induce a partition of $\queryVar{\queryq'}$ as well as the satisfaction of \cref{item:a:def:splitting-notion,item:c:def:splitting-notion} of \cref{def:splitting-notion}.

  We start from the former issue.
  First, note that by the above definitions the set \( \splittingtrees \) guarantees that all the set components of \( \splittingof{\queryq'}{\names} \) sums to \( \queryVar{\queryq'} \) and that~\( \splittingtrees \) are disjoint from the other sets.
  Moreover, since the variables from \( \splittingroots \) were excluded while defining~\( \splittingithsubtree{i} \) we conclude that \( \splittingroots \cap \splittingithsubtree{i} = \emptyset \) for any index \( i \).
  Hence, it suffices to take any two indices \( i < j \) and show the disjointness of~\( \splittingithsubtree{i} \) and \( \splittingithsubtree{j} \).
  Assume towards a contradiction that $\splittingithsubtree{i} \cap \splittingithsubtree{j} \neq \emptyset$. 
  Thus there is a variable $\varv$ reachable from both $\varx_i$ and $\varx_j$ (the roots of $\restr{\queryq}{\splittingithsubtree{i}}$ and~$\restr{\queryq}{\splittingithsubtree{j}}$, different by definition) via directed paths in \(  \interIquery{\restr{\queryq'}{\splittingithsubtree{i}}} \) and~\(  \interIquery{\restr{\queryq'}{\splittingithsubtree{j}}}\). 
  We consider the following cases:
  \begin{enumerate}\itemsep0em
    \item $\splittingithsubtree{i} = \{ \varx_i \}$ and $\splittingithsubtree{j} = \{ \varx_j \}$.\\
    This implies that $\varv = \varx_j$ or $\varv = \varx_i$ and contradicts the fact that each of the above sets is a singleton.
    \item $\varv = \varx_i$ (the case of $\varv = \varx_j$ is analogous).\\
    Hence, we infer the existence of a directed path from $\varx_j$ to~$\varx_i$ in \(  \interIquery{\restr{\queryq'}{\splittingithsubtree{j}}} \).
    Moreover, all the elements on this path are anonymous, since they belong to $\splittingithsubtree{j}$.
    Note that $\varx_i$ is also anonymous.
    Thus there is also a directed path (of positive length!) from $\matchpi'(\varx_j)$ to $\matchpi'(\varx_i)$ of length $\leq |\queryq'|$ in~$\interI$.
    But it contradicts the fact that the $|\queryq|$-neighbourhood of $\matchpi'(\varx_i)$ is an $\names'$-forward-forest, with some $\names'$ containing $\splittingname(\splittingrootof(i))$ and~$\splittingname(\splittingrootof(j))$.
    \item $\varx_i \neq \varv \neq \varx_j$.\\
    Thus there are variables $\varu \in \splittingithsubtree{i}$, $\varw \in \splittingithsubtree{j}$ and $\varz \in \splittingithsubtree{i} \cap \splittingithsubtree{j}$ (with $\varz$ possibly equal to $\varv$) such that $\roler(\varu, \varz) \in \queryq'$ and $\roles(\varw, \varz) \in \queryq'$ and $\varz$ is reachable from both $\varx_i$ and $\varx_j$. 
    Since we eliminated all the forks, we know that $\matchpi'(\varw) \neq \matchpi'(\varu)$.
    Thus, by applying the fact that $|\queryq|$-neighbourhood of $\matchpi'(\varx_i)$ is an $\names'$-forward-forest, we get a contradiction because $\matchpi'(\varu), \matchpi'(\varw), \matchpi'(\varz)$ do not form a forward-forest.
  \end{enumerate}

  Hence components of \( \splittingof{\queryq'}{\names} \) indeed induce a partition of $\queryVar{\queryq'}$.

  Next, we proceed with \cref{item:a:def:splitting-notion}.
  Take any connected component $\hat{\queryq}$ of $\restr{\queryq'}{\splittingtrees}$.
  Note that $|\hat{\queryq}| \leq |\queryq|$ and for any variable $\varv \in \queryVar{\hat{\queryq}}$ we have that $\matchpi'(\varv)$ is not $\names$-named.
  Hence, from the $(|\queryq|, \names)$--lff-likeness of $\interI$ we infer the substructure induced by $\matchpi'$ and $\hat{\queryq}$ is a forward-tree, so is~$\hat{\queryq}$ (we eliminated all the forks!).

  Finally, we need to argue that \cref{item:c:def:splitting-notion} of \cref{def:splitting-notion} holds.
  If $\varx \in \splittingroots$ and $\roler(\varx,\vary) \in \queryq'$ then either if $\matchpi'(\vary)$ is $\names$-named then $\vary \in \splittingroots$ (thus $\varx,\vary$ are in the same set) or $\vary$ is dangling from the root so, by the construction, is in some $\splittingithsubtree{i}$.
  By construction, $\vary$ is the root of $\restr{\queryq'}{\splittingithsubtree{i}}$.
  Otherwise $\varx \not\in \splittingroots$ and we consider the following cases:
  \begin{enumerate}\itemsep0em
      \item If $\varx \in \splittingithsubtree{i}$ and $\vary \in \splittingithsubtree{j} \cup \splittingtrees$ then $\vary \in \splittingithsubtree{i}$  violating the disjointness of these sets.
      \item $\vary \in \splittingroots$ or ($\varx \in \splittingtrees$ and $\vary \in \splittingithsubtree{i}$). 
      We get a contradiction with lff-likeness of $\interI$.
  \end{enumerate}
  
  This finishes the proof that \( \splittingof{\queryq'}{\names} \) is an $\names$-splitting of $\queryq'$.
  Next, we will argue that $\splittingof{\queryq'}{\names}$ is compatible with~$\interI$.
  \cref{item:A:def:splittings-compatible} follows from~\cref{corr:rolling-up-trees-full}.
  \cref{item:B:def:splittings-compatible,item:C:def:splittings-compatible} are immediate by the fact that $\interI \modelsmatch{\matchpi'} \queryq'$.
  Finally, for~\cref{item:D:def:splittings-compatible} we take $\varx_i$ (the $i$-th variable dangling from the roots) combine the fact that $\matchpi$ is a homomorphism, thus all the relations mentioned in $\queryq'$ between $\matchpi'(\splittingrootof(i))$ and $\matchpi'(\varx_i)$ are preserved, with~\cref{lemma:rolling-up-trees-induction} to infer that $\matchpi'(\varx_i) \in \matchConcept{\restr{\queryq}{\splittingithsubtree{i}}}^{\interI}$.
  This concludes the proof.
\end{proof}

Following Lutz, we say that a role conjunction  $\roles_1 \cap \ldots \cap \roles_n$ \emph{occurs} in a CQ $\queryq$ if we can find two variables~$\varv, \varv' \in \queryVarq$ such that $\{ \roler \in \Rlang \mid \roler(\varv,\varv') \in \queryq \} = \{ \roles_1, \roles_2, \ldots, \roles_n\}$.
Similarly, we speak about concept/role names occurring in $\queryq$.
Note that the role conjunctions and concept/role names used in~\cref{def:splittings-compatible} occur in $\queryq$. 

We will next link maximal fork rewritings and splittings.
Let $\queryq$ be a CQ and let $\QTree(\maximalforkrew{\queryq})$ denote the set of all forward-forest-shaped queries $\restr{\maximalforkrew{\queryq}}{\Reach(\varv)}$, where $\varv \in \queryVar{\maximalforkrew{\queryq}}$ and $\Reach(\varv)$ denotes the set of all variables reachable from $\varv$ in $\interIquery{\maximalforkrew{\queryq}}$ via a directed path.
Note that the size of $\QTree(\maximalforkrew{\queryq})$ is polynomial in the size of $\maximalforkrew{\queryq}$, thus also in $|\queryq|$.
The following lemma was shown in Appendix A of~\cite{LutzDL08}.

\begin{lemma}\label{lemma:query-trees-are-contained-in-max-fork-rew}
Let $\splittingof{\queryq'}{\names} = \left( \splittingroots, \splittingname, \splittingithsubtree{1}, \ldots, \splittingithsubtree{n}, \splittingrootof, \splittingtrees \right)$ be an $\names$-splitting of $\queryq'$, a fork rewriting of a CQ $\queryq$, let $\queryq_1', \queryq_2', \ldots, \queryq_k'$ be the disconnected components of $\restr{\queryq'}{\splittingtrees}$, and let $\varx_1, \varx_2, \ldots, \varx_n$ be the root variables of the corresponding $\restr{\queryq'}{\splittingithsubtree{i}}$. 
Then: 
\begin{itemize}\itemsep0em
  \item $\queryq_i' \in \QTree(\maximalforkrew{\queryq})$ for all $1 \leq i \leq k$,
  \item for all $1 \leq i \leq n$ we have $\restr{\queryq'}{\splittingithsubtree{i}} \in \QTree(\maximalforkrew{\queryq})$, and
  \item $\bigcap_{\roler \in \{ \roler \mid \roler( \splittingrootof(i), \varx_i ) \in \queryq' \}}.\roler$ occurs in $\maximalforkrew{\queryq}$.
\end{itemize}
\end{lemma}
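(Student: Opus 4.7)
The plan is to exploit the uniqueness of $\maximalforkrew{\queryq}$ (\cref{lemma:unique-maximal-fork-rew}) by extending the derivation witnessing that $\queryq'$ is a fork rewriting of $\queryq$ with further fork-elimination steps until $\maximalforkrew{\queryq}$ is reached. The central structural claim will be that \emph{every such additional step merges two variables that both belong to $\splittingroots$}; consequently, all variables outside $\splittingroots$ and all atoms among them survive unchanged inside $\maximalforkrew{\queryq}$.

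To prove this claim I would consider a step witnessed by atoms $\roler(\vary, \varx)$ and $\roles(\varz, \varx)$ with $\vary \neq \varz$ in the current query, and perform a case analysis on where $\varx$ lives, invoking \cref{item:a:def:splitting-notion,item:b:def:splitting-notion,item:c:def:splitting-notion} of \cref{def:splitting-notion}. If $\varx \in \splittingithsubtree{i}$ and $\varx \neq \varx_i$, then \cref{item:c:def:splitting-notion} forces both $\vary$ and $\varz$ to lie in $\splittingithsubtree{i}$; by the forward-tree-shapedness of $\restr{\queryq'}{\splittingithsubtree{i}}$ (\cref{item:b:def:splitting-notion}), the vertex $\varx$ has then a unique parent there, whence $\vary = \varz$. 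If $\varx = \varx_i$, then \cref{item:c:def:splitting-notion} offers only two options per incoming atom---$\vary \in \splittingithsubtree{i}$ or $\vary = \splittingrootof(i)$---of which the former is impossible since $\varx_i$ is the root of $\restr{\queryq'}{\splittingithsubtree{i}}$; the same reasoning yields $\varz = \splittingrootof(i)$, once again contradicting $\vary \neq \varz$. The case $\varx \in \splittingtrees$ is dispatched analogously using \cref{item:a:def:splitting-notion}. Hence $\varx \in \splittingroots$, and a further invocation of \cref{item:c:def:splitting-notion} forces $\vary, \varz \in \splittingroots$.

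Given this invariant, $\maximalforkrew{\queryq}$ contains each $\restr{\queryq'}{\splittingithsubtree{i}}$ and each component of $\restr{\queryq'}{\splittingtrees}$ as an unchanged sub-CQ. Items~(1) and~(2) then reduce to exhibiting the right root---$\varx_i$ for item~(2) and the root of $\queryq_i'$ for item~(1)---and checking that $\Reach(\varv)$ in $\maximalforkrew{\queryq}$ coincides with the corresponding variable set; for this we appeal to \cref{item:c:def:splitting-notion} once more to ensure that no directed edge leaves a $\splittingithsubtree{i}$ or a component of $\splittingtrees$, so that merging $\splittingroots$-variables cannot enlarge $\Reach(\varv)$. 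For item~(3) the same condition tells us that the only atoms entering $\varx_i$ whose source lies in $\splittingroots$ originate from $\splittingrootof(i)$; therefore, even after $\splittingrootof(i)$ is merged with other roots along the derivation, the set of role names between the merged variable and $\varx_i$ in $\maximalforkrew{\queryq}$ remains exactly $\{\roler \mid \roler(\splittingrootof(i), \varx_i) \in \queryq'\}$, witnessing that the required role conjunction occurs. The main obstacle is precisely the case analysis establishing the invariant; it is there that the conditions of \cref{def:splitting-notion} have to be combined carefully with forward-tree-shapedness of the individual restrictions.
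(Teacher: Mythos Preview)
Your proposal is correct and matches the approach the paper defers to (Lemma~4 of~\cite{LutzDL08}): extend the fork-elimination derivation from $\queryq'$ to $\maximalforkrew{\queryq}$ and argue, via precisely your case analysis on \cref{item:a:def:splitting-notion,item:b:def:splitting-notion,item:c:def:splitting-notion} of \cref{def:splitting-notion}, that every further step can only identify two variables of $\splittingroots$, so that the tree pieces $\restr{\queryq'}{\splittingithsubtree{i}}$, the components of $\restr{\queryq'}{\splittingtrees}$, and the role conjunctions into the $\varx_i$ survive unchanged in $\maximalforkrew{\queryq}$. The one point worth making explicit is the inductive maintenance: after merging two $\splittingroots$-variables the updated tuple is again an $\names$-splitting of the resulting query, which is what licenses applying your case analysis to the ``current query'' at each stage rather than only to $\queryq'$.
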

\begin{proof}
The same as the proof of Lemma 4 in~\cite{LutzDL08} assuming the naming convention from the appendix A.\footnote{Watch out! There is a glitch in Lutz's proof. In the inductive assumption no. 4, there should be $\{ \varv, \varv'\} \neq \{ \varv, \varv'\} \cap S_j \neq \emptyset$ rather than $\{ \varv, \varv'\} \cap S_j \neq \emptyset$. }
\end{proof}

\subsection{Step IV: Spoilers}

Spoilers~\cite[p. 6]{LutzDL08} are \( \ALCcap \)-KBs dedicated for blocking query matches over lff-like structures.

\begin{definition}\label{def:spoiler}
    Let \( \names \subseteq \Ilang \), \( \queryq \) be a CQ and let \( \splittingofq{\names} = (\splittingroots, \splittingname, \splittingithsubtree{1}, \ldots, \splittingithsubtree{n}, \splittingrootof, \splittingtrees) \) be an \( \names \)-splitting \( \splittingofq{\names} \) of \( \queryq \). 
    An $\names$-\emph{spoiler} \( \spoilerKBof{\splittingofq{\names}} \) for \( \splittingofq{\names} \) is an \( \ALCcap \)-KB satisfying at least one of:
    \begin{enumerate}[(A)]\itemsep0em
    \item \( \left( \topconcept \dlsubseteq \neg \matchConcept{\hat{\queryq}}  \right) \in \spoilerKBof{\splittingofq{\names}} \) for some forward-tree-shaped query \( \hat{\queryq} \), a connected component of \( \restr{\queryq}{\splittingtrees} \),\label{item:A:def:spoiler-notion}
    \item \( \left( \neg \conceptA(\splittingname(\varx)) \right) \in \spoilerKBof{\splittingofq{\names}} \) for some atom \( \conceptA(\varx) \in \queryq \) with \( \varx \in \splittingroots \),\label{item:B:def:spoiler-notion}
    \item \( \left( \neg \roler(\splittingname(\varx), \splittingname(\vary)) \right) \in \spoilerKBof{\splittingofq{\names}} \) from some atom \( \roler(\varx, \vary) \in \queryq \) with \( \varx, \vary \in \splittingroots \),\label{item:C:def:spoiler-notion}
    \item \( \left( \neg \exists \left( \bigcap_{\roler(\splittingrootof(i), \varx_i) \in \queryq} \roler \right) \matchConcept{\restr{\queryq}{\splittingithsubtree{i}}} \right) \left( \splittingname(\splittingrootof(i)) \right) \in \spoilerKBof{\splittingofq{\names}} \) for some index \( 1 \leq i \leq n \) (where \( \varx_i \) denotes the root variable of \( \restr{\queryq}{\splittingithsubtree{i}} \)).\label{item:D:def:spoiler-notion}
    \end{enumerate}
\end{definition}

Observe a tight correspondence between Items~\ref{item:A:def:spoiler-notion}--\ref{item:D:def:spoiler-notion} of the above definition and Items~\ref{item:A:def:splittings-compatible}--\ref{item:D:def:splittings-compatible} from \cref{def:splitting-notion}.
We may see these cases as potential ways of ``blocking'' compatibility of a given splitting.

\begin{definition}\label{def:super-spoiler}
    Let \( \names \subseteq \Ilang \) and let \( \queryq \) be a CQ\@. 
    An \( \ALCcap \)-KB \( \superspoilerKBof{\queryq} \) is an \( \names \)-super-spoiler for \( \queryq \) if it is a $\subseteq$-minimal KB such that for all fork rewritings \( \queryq' \) of \( \queryq \) and all \( \names \)-splittings \( \splittingof{\queryq'}{\names} \) of \( \queryq' \) we have that \( \superspoilerKBof{\queryq} \) is~an~$\names$-spoiler~for~\( \splittingof{\queryq'}{\names} \).
\end{definition}

The forthcoming lemma shows that the existence of an $\names$-super-spoiler ``spoils'' the (finite) entailment of an input CQ over (finite) $(|\queryq|,\names)$-lff-interpretations.
\begin{lemma}\label{lemma:spoliers-spoil-entailment-lffs}
Let $\interI$ be a (finite) $(|\queryq|,\names)$-lff-like interpretation and let $\queryq$ be a CQ.
Then $\interI \not\models \queryq$ if there is an $\names$-super-spoiler $\superspoilerKBof{\queryq}$ for \( \queryq \) such that $\interI \models \superspoilerKBof{\queryq}$ and if $\interI \models \superspoilerKBof{\queryq}$ for some $\names$-super-spoiler $\superspoilerKBof{\queryq}$ for $\queryq$ then $\interI \not\models \queryq$.
\end{lemma}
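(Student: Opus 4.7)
The plan is to deduce both implications from Lemma~\ref{lemma:splittings-linking-notions-together-from-splitting-to-match}, which states that $\interI \modelsoptfin \queryq$ iff some $\names$-splitting of some fork rewriting of $\queryq$ is compatible with $\interI$. Combined with the fact that Items~\ref{item:A:def:spoiler-notion}--\ref{item:D:def:spoiler-notion} of Definition~\ref{def:spoiler} are syntactically designed to negate Items~\ref{item:A:def:splittings-compatible}--\ref{item:D:def:splittings-compatible} of Definition~\ref{def:splittings-compatible}, both directions follow without much additional work.

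For the ``if'' direction, I assume $\interI \models \superspoilerKBof{\queryq}$ for some $\names$-super-spoiler and show $\interI \not\models \queryq$ by contradiction. If $\interI \models \queryq$, then by Lemma~\ref{lemma:splittings-linking-notions-together-from-splitting-to-match} there is a fork rewriting $\queryq'$ of $\queryq$ and a compatible $\names$-splitting $\splittingof{\queryq'}{\names}$. By the definition of super-spoiler, $\superspoilerKBof{\queryq}$ is a spoiler for $\splittingof{\queryq'}{\names}$, hence contains at least one axiom of type (A)--(D). A short case analysis then directly contradicts compatibility: an axiom $\topconcept \dlsubseteq \neg \matchConcept{\hat{\queryq}}$ says no element is a match-witness for the tree~$\hat{\queryq}$ whereas Item~\ref{item:A:def:splittings-compatible} guarantees one; axioms of form $\neg \conceptA(\splittingname(\varx))$ and $\neg \roler(\splittingname(\varx),\splittingname(\vary))$ contradict Items~\ref{item:B:def:splittings-compatible} and~\ref{item:C:def:splittings-compatible} respectively; and the type-(D) assertion at~$\splittingname(\splittingrootof(i))$ contradicts Item~\ref{item:D:def:splittings-compatible}.

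For the ``only if'' direction, I start from $\interI \not\models \queryq$ and build a super-spoiler satisfied by $\interI$. By Lemma~\ref{lemma:splittings-linking-notions-together-from-splitting-to-match}, no $\names$-splitting $\splittingof{\queryq'}{\names}$ of any fork rewriting $\queryq'$ of $\queryq$ is compatible with~$\interI$, so for each such pair at least one of the compatibility clauses~\ref{item:A:def:splittings-compatible}--\ref{item:D:def:splittings-compatible} fails in $\interI$. For every $(\queryq',\splittingof{\queryq'}{\names})$ pick one failing clause and throw the corresponding axiom from Definition~\ref{def:spoiler} into a preliminary KB~$\mathcal{L}$. By design each such axiom merely records a violation witnessed by $\interI$ itself, so $\interI \models \mathcal{L}$. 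The KB $\mathcal{L}$ is a spoiler for every splitting of every fork rewriting, so any $\subseteq$-minimal subset of $\mathcal{L}$ retaining that property is an $\names$-super-spoiler $\superspoilerKBof{\queryq}$ satisfied by~$\interI$.

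The main obstacle is controlling the construction in the ``only if'' direction: a~priori the class of fork rewritings is unbounded (each rewriting introduces fresh identified variables), and so is the naive collection of splittings. This is resolved by Lemma~\ref{lemma:query-trees-are-contained-in-max-fork-rew}, which pins every forest-shaped component arising in any splitting of any fork rewriting to the polynomial-size set $\QTree(\maximalforkrew{\queryq})$, and every role-conjunction used to those occurring in $\maximalforkrew{\queryq}$. Up to the irrelevant renaming of fresh variables this cuts the syntactic material of all candidate spoiler axioms down to a finite set, so that $\mathcal{L}$ may be taken finite and the $\subseteq$-minimization of Definition~\ref{def:super-spoiler} is well-defined.
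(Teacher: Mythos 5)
Your proof is correct and follows essentially the same route as the paper: both directions are reduced to Lemma~\ref{lemma:splittings-linking-notions-together-from-splitting-to-match}, with the forward direction built by collecting, for each pair $(\queryq',\splittingof{\queryq'}{\names})$, one spoiler axiom from Definition~\ref{def:spoiler} corresponding to a failing clause of Definition~\ref{def:splittings-compatible} (each of which $\interI$ automatically satisfies), and the reverse direction is a direct contradiction via a compatible splitting. Your presentation is in fact slightly cleaner than the paper's in two small ways: you explicitly pass to a $\subseteq$-minimal subset of $\mathcal{L}$ to obtain an actual super-spoiler as defined in Definition~\ref{def:super-spoiler} (the paper's incremental construction adds an axiom per pair but does not visibly discard later-redundant ones), and you flag the finiteness-of-$\mathcal{L}$ issue and resolve it via Lemma~\ref{lemma:query-trees-are-contained-in-max-fork-rew}, which the paper leaves implicit.
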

\begin{proof}[Proof (from non-entailment to super-spoilers).]
  We construct a sequence of KBs $\kbK_0 := \emptyset, \kbK_1, \kbK_2, \ldots$ converging to an $\names$-super-spoiler for $\queryq$. 
  To do it, fix some ordering on pairs $(\queryq', \splittingof{\queryq'}{\names})$ of fork rewritings $\queryq'$ and $\names$-splittings of $\queryq'$, and consider $i$-th such pair.
  Observe that $\splittingof{\queryq'}{\names}$ is not compatible with $\interI$. 
  Indeed, otherwise by~\cref{lemma:splittings-linking-notions-together-from-splitting-to-match} we would have $\interI \models \queryq$.
  Thus, there is at least one item of~\cref{def:splittings-compatible} that is not satisfied. 
  Let $\alpha$ be the axiom the corresponding axiom~\cref{def:splittings-compatible}.
  Note that $\interI \not\models \alpha$.
  Hence, let $\beta$ be the corresponding ``negated'' axiom from~\cref{def:spoiler}.
  We put $\kbK_i := \kbK_{i-1}$ if $\beta$ is already in $\kbK_{i-1}$ and $\kbK_i := \kbK_{i-1} \cup \{ \beta \}$ otherwise.
  From the definition of a spoiler, we see that $\kbK_i$ is an $\names$-spoiler for the $i$-th pair. Moreover, $\interI \models \beta$.
  Hence, the last KB on the list is the desired $\names$-super-spoiler $\superspoilerKBof{\queryq}$ for $\queryq$ and, by the construction, $\interI$ is a (finite) model of $\superspoilerKBof{\queryq}$.
\end{proof}
\begin{proof}[Proof (from a super-spoiler to non-entailment).]
  Ad absurdum, assume $\interI \models \queryq$.
  Hence, by~\cref{lemma:splittings-linking-notions-together-from-splitting-to-match} we infer that there is a fork rewriting \( \queryq' \) of~\( \queryq \) and an $\names$-splitting \( \splittingof{\queryq'}{\names} \) of \( \queryq' \) that is compatible with \( \interI \).
  Since \( \superspoilerKBof{\queryq} \) is an $\names$-super-spoiler for $\queryq$ we have that, by~\cref{def:super-spoiler}, it is also an $\names$-spoiler for \( \splittingof{\queryq'}{\names} \). 
  This implies that for \( \splittingof{\queryq'}{\names} \) at least one of the conditions \ref{item:A:def:spoiler-notion}--\ref{item:C:def:spoiler-notion} from~\cref{def:spoiler} hold, contradicting the compatibility of \( \splittingof{\queryq'}{\names} \) with \( \interI \). 
\end{proof}

Thus, relying on the presented lemma we conclude a reduction from the (U)CQ entailment problem to the problem of checking an existence of an $\indK$-super-spoiler spoiling the (finite) satisfiability of $\kbK$.
\begin{lemma}\label{lemma:spoliers-spoil-ucq-over-locally-forward-dls}
Let \( \abstrDL \) be (finitary) locally-forward abstract DL, \( \kbK \) be a (finitely) satisfiable $\abstrDL$-KB and $\queryq = \textstyle\bigvee_{i=1}^{m} \queryq_i$ be a UCQ.
Then \( \kbK \not\modelsoptfin \queryq \) iff there are \( \indK \)-super-spoilers $\superspoilerKBof{\queryq_i}$ for all \( \queryq_i \) s.t. $\kbK \cup \textstyle\bigcup_i \superspoilerKBof{\queryq_i}$ is (finitely) satisfiable.
\end{lemma}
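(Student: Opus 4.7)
The plan is to reduce UCQ-entailment to CQ-entailment disjunct by disjunct, applying the single-query spoiler machinery of~\cref{lemma:spoliers-spoil-entailment-lffs} to each $\queryq_i$ separately and bundling the resulting super-spoilers into one enlarged KB. Both directions use the \emph{same} reference interpretation for all disjuncts, which works because the super-spoiler axioms are additive (collecting them only strengthens the KB) and because $(n,\names)$-lff-likeness is monotone in $n$: the $k$-neighbourhood of any element, being a substructure of the forward-forest-shaped $n$-neighbourhood for $k \leq n$, remains forward-forest-shaped. Hence an $(|\queryq|,\indK)$-lff-like interpretation is simultaneously $(|\queryq_i|,\indK)$-lff-like for every disjunct.

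For the ``only if'' direction, suppose $\kbK \not\modelsoptfin \queryq$. I would first apply~\cref{fact:only-lff-like-countermodels-matters} to produce a (finite) $(|\queryq|, \indK)$-lff-like countermodel $\interI$ for $\kbK$ and $\queryq$. Since $\interI$ refutes each disjunct $\queryq_i$, and is in particular $(|\queryq_i|,\indK)$-lff-like by the monotonicity just noted, the non-entailment-to-super-spoiler direction of~\cref{lemma:spoliers-spoil-entailment-lffs} yields, for each~$i$, an $\indK$-super-spoiler $\superspoilerKBof{\queryq_i}$ satisfied by $\interI$. Because $\interI \models \kbK$ also holds, the same $\interI$ witnesses the (finite) satisfiability of $\kbK \cup \bigcup_i \superspoilerKBof{\queryq_i}$, as required.

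For the ``if'' direction, assume that $\indK$-super-spoilers $\superspoilerKBof{\queryq_i}$ exist making $\kbK \cup \bigcup_i \superspoilerKBof{\queryq_i}$ (finitely) satisfiable. Since each $\superspoilerKBof{\queryq_i}$ is an $\ALCcap$-KB (by~\cref{def:super-spoiler}) and $\abstrDL$ polynomially encodes $\ALCcap$ concepts, the enriched KB is itself an $\abstrDL$-KB, so (finitary) local-forwardness of $\abstrDL$ applies to it: invoking the definition on any (finite) model of the enriched KB gives a (finite) $(|\queryq|, \indK)$-lff-like model $\interI$ of it. In particular $\interI \models \kbK$ and $\interI \models \superspoilerKBof{\queryq_i}$ for every $i$, so the super-spoiler-to-non-entailment direction of~\cref{lemma:spoliers-spoil-entailment-lffs} applied disjunct by disjunct gives $\interI \not\models \queryq_i$ for all $i$, hence $\interI \not\models \queryq$, producing the desired (finite) countermodel for $\kbK$ and $\queryq$. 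The only subtle point of the entire argument is that local-forwardness must be invoked on the enriched KB rather than on $\kbK$ alone; this is precisely why~\cref{def:super-spoiler} carefully confines the spoiler axioms to the $\ALCcap$ fragment that $\abstrDL$ is assumed to subsume.
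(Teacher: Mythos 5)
Your proof is correct and follows essentially the same structure as the paper's: treat the UCQ disjunct by disjunct, apply the single-CQ \cref{lemma:spoliers-spoil-entailment-lffs} to each $\queryq_i$ over one common lff-like interpretation, and collect the resulting super-spoilers into (resp.\ read them off from) the enlarged KB. The only place you are slightly more careful than the paper is in explicitly observing that $(n,\names)$-lff-likeness implies $(k,\names)$-lff-likeness for $k\leq n$ --- a monotonicity step that the paper's proof also tacitly relies on when passing from the $(|\queryq|,\indK)$-lff-like interpretation to its $(|\queryq_i|,\indK)$-lff-likeness for each disjunct, and when invoking \cref{fact:only-lff-like-countermodels-matters} on the enriched KB $\kbK\cup\bigcup_i\superspoilerKBof{\queryq_i}$ rather than on $\kbK$ alone.
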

\begin{proof}
  Note that since super-spoiler are $\ALCcap$-KBs, they belong to $\abstrDL$ by definition.
  For the right-to-left direction, assume towards a contradiction that $\kbK \modelsoptfin \queryq$ holds and take any (finite) $(|\queryq|,\indK)$-lff-like model $\interI$ of $\kbK \cup \textstyle\bigcup_i \superspoilerKBof{\queryq_i}$ (guaranteed by~\cref{fact:only-lff-like-countermodels-matters}).
  By assumption we conclude $\interI \models \queryq$ and hence $\interI \models \queryq_i$ for some~$1 \leq i \leq m$.
  But $\interI \models \superspoilerKBof{\queryq_i}$, which contradicts~\cref{lemma:spoliers-spoil-entailment-lffs}. 
  For the other direction, take any (finite) $(|\queryq|,\indK)$-lff-like countermodel $\interI$ for $\kbK$ and $|\queryq|$ (guaranteed by~\cref{fact:only-lff-like-countermodels-matters}).
Hence, for each $1 \leq i \leq n$ we have that $\interI \not\models \queryq_i$ and by~\cref{lemma:spoliers-spoil-entailment-lffs} we get an $\indK$-super-spoiler $\superspoilerKBof{\queryq_i}$ for~$\queryq_i$ such that $\interI \models \superspoilerKBof{\queryq_i}$.
Thus $\interI \models \kbK \cup \textstyle\bigcup_i \superspoilerKBof{\queryq_i}$.
\end{proof}

\subsection{Step V: Super-spoilers made small and efficient}

To get the optimal complexity bounds, we need to show that there are exponentially many super-spoilers that can be enumerated in exponential time and that the size of each super-spoiler is only of polynomial size.

We first show the following lemma (an analogous of~\cite[Lemma 5]{LutzDL08}) proving small size of super-spoilers.

\begin{lemma}\label{lemma:special-shape-of-super-spoilers}
Let $\names \subseteq \Ilang$ be finite, $\queryq$ be a CQ and let $\superspoilerKBof{\queryq}$ be an \( \names \)-super-spoiler for \( \queryq \).
Then all the axioms contained in $\superspoilerKBof{\queryq}$ are of one of the following forms:
    \begin{enumerate}[label={(\Alph*$'$)}]\itemsep0em
    \item \( \topconcept \dlsubseteq \neg \matchConcept{\hat{\queryq}} \) for some forward-tree-shaped query \( \hat{\queryq} \in \QTree(\maximalforkrew{\queryq}) \),\label{item:A:def:spoiler-notion-simpler}
    \item \( \neg \conceptA(\indva) \) for some name \( \indva \in \names \) and a concept name $\conceptA$ occurring in $\maximalforkrew{\queryq}$,\label{item:B:def:spoiler-notion-simpler}
    \item \( \neg \roler(\indva, \indvb) \) for some names \( \indva,\indvb \in \names \) and a role name $\roler$ occurring in $\maximalforkrew{\queryq}$,\label{item:C:def:spoiler-notion-simpler}
    \item \( \left( \neg \exists \left( \roles_1 \cap \roles_2 \cap \ldots \cap \roles_k \right)\matchConcept{\hat{\queryq}} \right) \left( \indva \right) \) 
    for some forward-tree-shaped query \( \hat{\queryq} \in \QTree(\maximalforkrew{\queryq}) \), name \( \indva \in \names \) and role conjunction $\roles_1 \cap \roles_2 \cap \ldots \cap \roles_k$ occurring in $\maximalforkrew{\queryq}$.\label{item:D:def:spoiler-notion-simpler}
    \end{enumerate}
\end{lemma}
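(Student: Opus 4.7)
The plan is to take an arbitrary axiom $\alpha \in \superspoilerKBof{\queryq}$ and show it must fall into one of the four prescribed forms \ref{item:A:def:spoiler-notion-simpler}--\ref{item:D:def:spoiler-notion-simpler}. The basic strategy is: by $\subseteq$-minimality of the super-spoiler (\cref{def:super-spoiler}), the axiom $\alpha$ cannot be dropped; consequently there must exist some specific fork rewriting $\queryq'$ of $\queryq$ together with an $\names$-splitting $\splittingof{\queryq'}{\names}$ of $\queryq'$ for which the smaller KB $\superspoilerKBof{\queryq} \setminus \{\alpha\}$ \emph{fails} to be an $\names$-spoiler. For this specific pair, none of the remaining axioms match any of the four clauses of~\cref{def:spoiler}, so $\alpha$ itself must be the axiom witnessing spoilerhood for that pair. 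In particular, $\alpha$ has one of the shapes \ref{item:A:def:spoiler-notion}--\ref{item:D:def:spoiler-notion} from~\cref{def:spoiler}, instantiated with data coming from $\splittingof{\queryq'}{\names}$.

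The second step is to feed this instantiation through \cref{lemma:query-trees-are-contained-in-max-fork-rew} to obtain the stronger ``primed'' forms. For clause~\ref{item:A:def:spoiler-notion}, $\hat{\queryq}$ is a connected component of $\restr{\queryq'}{\splittingtrees}$, which by~\cref{lemma:query-trees-are-contained-in-max-fork-rew} lies in $\QTree(\maximalforkrew{\queryq})$, giving~\ref{item:A:def:spoiler-notion-simpler}. For clause~\ref{item:D:def:spoiler-notion}, the same lemma yields both that $\restr{\queryq'}{\splittingithsubtree{i}} \in \QTree(\maximalforkrew{\queryq})$ and that the role conjunction $\bigcap \{ \roler \mid \roler(\splittingrootof(i), \varx_i) \in \queryq'\}$ occurs in $\maximalforkrew{\queryq}$, giving~\ref{item:D:def:spoiler-notion-simpler}. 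For clauses~\ref{item:B:def:spoiler-notion} and~\ref{item:C:def:spoiler-notion} the argument is even easier: fork elimination only identifies variables, so the multiset of atoms of $\queryq'$ — in particular the set of concept names and role names occurring — is unchanged when passing between $\queryq$, $\queryq'$ and $\maximalforkrew{\queryq}$; hence any $\conceptA$ (resp.\ $\roler$) appearing in an atom of $\queryq'$ also occurs in $\maximalforkrew{\queryq}$, giving~\ref{item:B:def:spoiler-notion-simpler} and~\ref{item:C:def:spoiler-notion-simpler}. In all four cases the individual names $\indva, \indvb$ arise as values of $\splittingname$, which by~\cref{def:splitting-notion} maps into $\names$.

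The main obstacle is a clean justification of the minimality argument: one must verify that ``$\alpha$ cannot be dropped'' really does imply that $\alpha$ alone serves as a spoiler-witness for \emph{some} single pair $(\queryq', \splittingof{\queryq'}{\names})$, rather than being a fragment of a larger witness. This is immediate once one notices that each clause of \cref{def:spoiler} is a condition on a \emph{single} axiom of the KB (of the form ``\(\beta \in \spoilerKBof{\splittingofq{\names}}\)''), rather than on a subset of axioms; hence membership of the super-spoiler in the set of $\names$-spoilers for a fixed pair is monotone in a single-axiom sense. Thus if $\superspoilerKBof{\queryq}$ is a spoiler for the pair but $\superspoilerKBof{\queryq} \setminus \{\alpha\}$ is not, then $\alpha$ must itself instantiate one of~\ref{item:A:def:spoiler-notion}--\ref{item:D:def:spoiler-notion} for that pair, and~\cref{lemma:query-trees-are-contained-in-max-fork-rew} finishes the job.
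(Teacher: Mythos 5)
Your proof is correct and follows essentially the same route as the paper's: each axiom of the super-spoiler is shown to instantiate one of the four clauses of \cref{def:spoiler} for some pair $(\queryq', \splittingof{\queryq'}{\names})$, and then \cref{lemma:query-trees-are-contained-in-max-fork-rew} (together with the observation that fork elimination preserves the occurring concept and role names, and that $\splittingname$ maps into $\names$) upgrades those clauses to the primed forms. The one place you go beyond the paper is the explicit $\subseteq$-minimality argument justifying that every axiom of $\superspoilerKBof{\queryq}$ really is a single-axiom spoiler witness for some pair; the paper's proof silently assumes this and jumps straight to the case analysis on the form of $\alpha$, so your version is actually the more careful of the two.
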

\begin{proof}
  Take any \( \names \)-super-spoiler for \( \queryq \) and let $\alpha$ be any of its axioms and we show that are of the shape above.
  If $\alpha$ is of the form of~\cref{item:B:def:spoiler-notion} or~\cref{item:C:def:spoiler-notion} we are done by the fact that (1) if $\roler$/$\conceptA$ occurs in $\queryq$ then it also occurs in the maximal fork rewriting (2) $\splittingname$ assigns values from $\names$.
  If $\alpha$ is of the form of~\cref{item:A:def:spoiler-notion} we invoke~\cref{lemma:query-trees-are-contained-in-max-fork-rew}.
  Applying all mentioned arguments we are also done with the case when $\alpha$ has the form from~\cref{item:D:def:spoiler-notion}.
\end{proof}

As a direct consequence of the above lemma we obtain:
\begin{lemma}\label{lemma:super-spoiler-small-size}
The size of every \( \names \)-super-spoiler for a CQ \( \queryq \) is polynomial in $|\queryq|+|\names|$ and the total number of \( \names \)-super-spoilers is exponential in $|\queryq|+|\names|$. 
\end{lemma}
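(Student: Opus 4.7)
The plan is to observe that~\cref{lemma:special-shape-of-super-spoilers} already pins down the shape of every axiom that may appear in a super-spoiler to one of the four templates~\ref{item:A:def:spoiler-notion-simpler}--\ref{item:D:def:spoiler-notion-simpler}, so the lemma reduces to counting, for each template, how many distinct axioms are producible and how large each of them is. Once we show that the total number of producible axioms is polynomial in $|\queryq|+|\names|$, both statements of the lemma follow at once: since $\superspoilerKBof{\queryq}$ is a subset of this polynomially-sized ``axiom pool'', its size is polynomial, and the number of super-spoilers is bounded by the number of subsets of that pool, hence exponential.

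To carry out the count I would first recall that $|\maximalforkrew{\queryq}|$ is polynomial in $|\queryq|$ (fork elimination only identifies variables, so the number of atoms does not grow), and that $|\QTree(\maximalforkrew{\queryq})|$ is polynomial in $|\queryq|$, as noted in the paragraph preceding~\cref{lemma:query-trees-are-contained-in-max-fork-rew}. Likewise, the number of concept names, role names, and role conjunctions occurring in $\maximalforkrew{\queryq}$ is polynomial in $|\queryq|$. Then I would tabulate:
\begin{itemize}\itemsep0em
 \item axioms of form~\ref{item:A:def:spoiler-notion-simpler}: at most $|\QTree(\maximalforkrew{\queryq})|$ of them, each of size linear in $|\queryq|$ by the size estimate for $\matchConcept{\cdot}$ given after~\cref{def:rolling-up-concepts};
 \item axioms of form~\ref{item:B:def:spoiler-notion-simpler}: at most $|\names|$ times the number of concept names in $\maximalforkrew{\queryq}$, each of constant size modulo the names;
 \item axioms of form~\ref{item:C:def:spoiler-notion-simpler}: at most $|\names|^{2}$ times the number of role names in $\maximalforkrew{\queryq}$, each of constant size modulo the names;
 \item axioms of form~\ref{item:D:def:spoiler-notion-simpler}: at most $|\names|\cdot|\QTree(\maximalforkrew{\queryq})|$ times the number of role conjunctions occurring in $\maximalforkrew{\queryq}$, each of size linear in $|\queryq|$.
\end{itemize}
Summing the four bounds yields a polynomial $p(|\queryq|+|\names|)$ bounding both the cardinality of the axiom pool and the size of every axiom drawn from it.

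From here the conclusion is immediate: any $\names$-super-spoiler is a subset of the pool, so its total syntactic size is bounded by $p(|\queryq|+|\names|)^{2}$, which is polynomial; and the number of $\subseteq$-minimal such KBs is at most $2^{p(|\queryq|+|\names|)}$, which is exponential. I do not expect any genuine obstacle here — the work is purely combinatorial counting — but the one point that deserves a sentence of justification is the polynomial bound on $|\QTree(\maximalforkrew{\queryq})|$ and on the number of role conjunctions in $\maximalforkrew{\queryq}$, since~\ref{item:A:def:spoiler-notion-simpler} and~\ref{item:D:def:spoiler-notion-simpler} are the only places where a superpolynomial blow-up could a priori creep in; this bound is the one already used (without proof) in the paragraph introducing $\QTree(\cdot)$, and I would simply invoke it.
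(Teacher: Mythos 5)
Your proposal is correct and follows essentially the same route as the paper: invoke \cref{lemma:special-shape-of-super-spoilers} to restrict the axiom pool, bound the cardinality of each of the four templates (the paper's bounds $|\QTree(\queryq^*)|$, $|\queryq|\cdot|\names|$, $|\queryq|\cdot|\names|^2$, $|\queryq|\cdot|\QTree(\queryq^*)|\cdot|\names|$ match yours after observing that the number of concept names, role names, and role conjunctions occurring in $\maximalforkrew{\queryq}$ is at most $|\queryq|$), and read off the polynomial size and exponential count from the subset bound. No gaps; your extra sentence on why $|\QTree(\maximalforkrew{\queryq})|$ is polynomial is good hygiene but relies on the same unproved-in-text observation the paper itself uses.
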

\begin{proof}
Let $\maximalforkrew{\queryq} = \queryq^*$.
To bound the size of \( \names \)-super-spoilers we invoke~\cref{lemma:special-shape-of-super-spoilers} and see that the axioms of the corresponding items can be bounded, respectively, by $|\QTree(\queryq^*)|$, $|\queryq|\cdot|\names|$, $|\queryq|\cdot|\names|^2$ and $|\queryq|\cdot|\QTree(\queryq^*)|\cdot|\names|$.
Since $|\QTree(\queryq^*)|$ is bounded polynomially in $|\queryq|$ we are done.
The latter part is now immediate. 
\end{proof}

The last property in our path leading to an algorithm solving the (U)CQ-entailment is the ability to enumerate $\names$-super-spoilers in exponential~time.

\begin{lemma} \label{lemma:enumerate-spoilers}
The set of all $\names$-super-spoilers for a CQ $\queryq$ can be enumerated in time exponential in $|\names|+|\queryq|$.
\end{lemma}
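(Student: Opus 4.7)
The plan is to enumerate super-spoilers by iterating over all subsets of a polynomial-size candidate set of axioms and testing each for the required properties. By~\cref{lemma:special-shape-of-super-spoilers}, every axiom appearing in any $\names$-super-spoiler for $\queryq$ must be of one of the four parameterised forms of that lemma, and by~\cref{lemma:super-spoiler-small-size} the set $S$ of all such candidate axioms has size polynomial in $|\queryq|+|\names|$. Moreover, $S$ itself is constructible in polynomial time: $\maximalforkrew{\queryq}$ is computable by~\cref{lemma:unique-maximal-fork-rew}, and hence so are $\QTree(\maximalforkrew{\queryq})$ and the finite sets of concept/role names and role conjunctions occurring in $\maximalforkrew{\queryq}$. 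Every super-spoiler is therefore a subset of $S$, leaving at most $2^{|S|}$ candidates, i.e., exponentially many in $|\queryq|+|\names|$.

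To decide whether a given $T \subseteq S$ is an $\names$-super-spoiler I would, in a preprocessing stage, enumerate all pairs $(\queryq', \splittingof{\queryq'}{\names})$ consisting of a fork rewriting $\queryq'$ of $\queryq$ and an $\names$-splitting of $\queryq'$. Each fork rewriting is encoded by an equivalence relation on $\queryVarq$, and each splitting by an ordered partition of $\queryVar{\queryq'}$ together with the functions $\splittingname$ and $\splittingrootof$; both enumerations produce only exponentially many candidates in $|\queryq|+|\names|$, and validity against~\cref{def:splitting-notion} is checkable in polynomial time per candidate. For each valid pair $(\queryq', \Pi)$ I would precompute the set $B_{\queryq',\Pi} \subseteq S$ of axioms whose presence witnesses the spoiler property for $\Pi$ according to items~\ref{item:A:def:spoiler-notion}--\ref{item:D:def:spoiler-notion} of~\cref{def:spoiler}; by~\cref{lemma:query-trees-are-contained-in-max-fork-rew} all such axioms indeed lie in $S$. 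Then $T$ satisfies the spoiler property iff $T \cap B_{\queryq',\Pi} \neq \emptyset$ for every precomputed pair, which amounts to exponentially many polynomial-time hitting tests; minimality is verified by repeating the spoiler test for each $T \setminus \{\alpha\}$ with $\alpha \in T$, contributing only a polynomial factor since $|T| \le |S|$ is polynomial.

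Summing up, each of the $2^{|S|}$ candidate subsets is tested in exponential time, yielding total running time exponential in $|\queryq|+|\names|$. The principal obstacle, and the only point at risk of blowing the complexity past the exponential bound, is the enumeration of (fork rewriting, splitting) pairs: one must carefully verify that the combined choices of variable identifications, partitions of $\queryVar{\queryq'}$, labellings by names from $\names$, and orderings of subtrees contribute jointly only an exponential factor. This is the case because each component is individually bounded by $2^{O(|\queryq|+|\names|)}$, so the product, and hence the entire enumeration procedure, remains within the desired exponential bound.
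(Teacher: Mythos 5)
Your proposal is correct and takes essentially the same route as the paper: both use \cref{lemma:special-shape-of-super-spoilers} to bound the candidate axiom set to polynomial size, enumerate all exponentially many subsets, and test each one against the exponentially many (fork rewriting, splitting) pairs in polynomial time per pair. Your explicit handling of the $\subseteq$-minimality requirement and your reformulation of the per-pair check as a hitting test on precomputed blocking sets $B_{\queryq',\Pi}$ are a slightly more careful spelling-out of what the paper leaves implicit, but they do not change the underlying argument or the complexity analysis.
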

\begin{proof}
We enumerate $\names$-super-spoilers as follows. 
We first enumerate all $\ALCcap$-KBs containing only the axioms stated in~\cref{lemma:special-shape-of-super-spoilers} (requires time exponential in $|\names|+|\queryq$).
To check if a knowledge-base is indeed an $\names$-super-spoiler, we go through all fork rewritings (there are exponentially many in $|\queryq|$ of them) and all splittings for them (exponential in $|\names|+|\queryq|$). 
Then we apply the definition of $\names$-spoilers to check if the considered knowledge-base indeed blocks the splitting, which can be performed, after fixing an $\names$-spoiler and an $\names$-splitting, in polynomial time. 
The execution times are multiplied, hence the overall algorithm works in time exponential in $|\names|+|\queryq|$.
\end{proof}

\subsection{Step VI: The algorithm}

We are ready to present an algorithm for deciding (finite) (U)CQ entailment problem over (finitary) locally-forward DLs, that is worst-case optimal in many scenarios, \eg in the case when the (finite)satisfiability problem for the DL is $\ExpTime$-complete.
We present a pseudocode below.

\begin{algorithm}[h]\label{algorithm:ucq-entailment}
  \DontPrintSemicolon
  \KwData{A UCQ $\queryq = \textstyle \bigvee_{i=1}^{m} \queryq_i$ and an $\abstrDL$-KB $\kbK$.}
  \caption{Checking (finite) UCQ entailment over (finitary) locally-forward abstract DL KBs}\label{algo:ucq-entailment-over-L}

  If $\kbK$ is not (finitely) satisfiable \textbf{return} \texttt{True}. \tcp*{Checkable in $\SAT_{\abstrDL}(\textsf{poly}(\kbK))$.}
  \ForEach{selection of $\indK$-super spoilers $\superspoilerKBof{\queryq_1}, \ldots, \superspoilerKBof{\queryq_m}$ for $\queryq_1, \ldots, \queryq_m$ \tcp*{In $\mathsf{exp}(|\indK|{+}|\queryq|)$ by~L.~\ref{lemma:enumerate-spoilers} } }{%
    If $\kbK \cup \textstyle\bigcup_i \superspoilerKBof{\queryq_i}$ is (finitely) satisfiable \textbf{return} \texttt{False}.\tcp*{In $\SAT_{\abstrDL}(\mathsf{poly}(|\kbK| + |\queryq|))$ by~\cref{lemma:super-spoiler-small-size}}
  }
  \textbf{return} \texttt{True}.
\end{algorithm}

\begin{lemma}\label{lemma:correctness}
Procedure~\ref{algo:ucq-entailment-over-L} returns \texttt{True} iff $\kbK \modelsoptfin \queryq$. 
Moreover, Procedure~\ref{algo:ucq-entailment-over-L} can be implemented to work in time $\mathsf{exp}(|\kbK|+|\queryq|) \cdot \SAT_{\abstrDL}(\mathsf{poly}(|\kbK| {+} |\queryq|))$ for some polynomial function $\mathsf{poly}$ and an exponential function $\mathsf{exp}$ and with $\SAT_{\abstrDL}$ denoting the worst-case optimal running time of the (finite) satisfiability problem for $\abstrDL$-KBs.
\end{lemma}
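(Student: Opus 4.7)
The plan is to derive both parts of the statement as essentially direct consequences of the previously established lemmas, which have already done the hard work. First I would handle correctness of the decision. The early return on line 1 is immediate: if $\kbK$ is (finitely) unsatisfiable, then it vacuously (finitely) entails every UCQ, so outputting \texttt{True} is correct. For the main loop, I would invoke \cref{lemma:spoliers-spoil-ucq-over-locally-forward-dls}, which tells us that $\kbK \not\modelsoptfin \queryq$ holds precisely when there is a choice of $\indK$-super-spoilers $\superspoilerKBof{\queryq_1}, \ldots, \superspoilerKBof{\queryq_m}$ for the disjuncts of $\queryq$ such that $\kbK \cup \bigcup_i \superspoilerKBof{\queryq_i}$ is (finitely) satisfiable. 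Since the loop exhaustively enumerates all such selections and returns \texttt{False} exactly when one is witnessed, the algorithm outputs \texttt{True} if and only if no such selection exists, which by the aforementioned lemma is equivalent to $\kbK \modelsoptfin \queryq$.

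Next I would estimate the running time by a straightforward bookkeeping of the loops and the sizes of the objects involved. The satisfiability check in line 1 is performed on an input of size $|\kbK|$, hence it takes $\SAT_{\abstrDL}(\mathsf{poly}(|\kbK|))$ time. For the main loop, \cref{lemma:enumerate-spoilers} guarantees that for each disjunct $\queryq_i$ the set of all $\indK$-super-spoilers can be enumerated in time exponential in $|\indK|+|\queryq_i|$; taking a Cartesian product over the $m \leq |\queryq|$ disjuncts only blows this up by another exponential factor, which stays exponential in $|\kbK|+|\queryq|$. Thanks to \cref{lemma:super-spoiler-small-size}, each individual super-spoiler has size polynomial in $|\indK|+|\queryq_i| \leq |\kbK|+|\queryq|$, so the KB $\kbK \cup \bigcup_i \superspoilerKBof{\queryq_i}$ handed to the satisfiability oracle has size polynomial in $|\kbK|+|\queryq|$. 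Multiplying the number of iterations by the per-iteration cost yields the desired bound $\mathsf{exp}(|\kbK|+|\queryq|) \cdot \SAT_{\abstrDL}(\mathsf{poly}(|\kbK|+|\queryq|))$.

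Since every step is a direct appeal to a prior lemma, I do not anticipate any substantial obstacle; the only minor point worth making explicit is that one must be careful to choose \emph{one} super-spoiler per disjunct of the UCQ (so the enumeration is over tuples, not sets), and that the union $\kbK \cup \bigcup_i \superspoilerKBof{\queryq_i}$ remains an $\abstrDL$-KB because super-spoilers are by definition $\ALCcap$-KBs and $\abstrDL$ semantically extends $\ALCcap$ with polynomial encoding of $\ALCcap$ concepts, so the size inflation incurred by the translation is absorbed by $\mathsf{poly}$.
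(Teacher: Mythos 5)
Your proposal is correct and follows essentially the same route as the paper: correctness is read off directly from \cref{lemma:spoliers-spoil-ucq-over-locally-forward-dls} (with the trivial first-line case handled separately), and the running-time bound is obtained by combining \cref{lemma:super-spoiler-small-size} and \cref{lemma:enumerate-spoilers}. The only place you are slightly more explicit than the paper is in spelling out that the Cartesian product over the $m\le|\queryq|$ disjuncts of the per-disjunct enumeration still stays exponential in $|\kbK|+|\queryq|$, a point the paper leaves implicit.
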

\begin{proof}
For the first statement of the lemma we consider the following cases.
If $\kbK$ is not (finitely) satisfiable then it entails every query. Our procedure returns $\texttt{True}$ in this case.
If $\kbK$ is (finitely) satisfiable but does not (finitely) entail $\queryq$, then by~\cref{lemma:spoliers-spoil-ucq-over-locally-forward-dls} there are $\indK$-super-spoilers $\superspoilerKBof{\queryq_i}$ for $\queryq_i$ such that $\kbK \cup \textstyle\bigcup_i \superspoilerKBof{\queryq_i}$ is (finitely) satisfiable and hence, the fourth line of the algorithm returns $\texttt{False}$.
Otherwise, $\kbK$ is (finite) satisfiable and (finitely) entails~$\queryq$. Thus again, by~\cref{lemma:spoliers-spoil-ucq-over-locally-forward-dls}, there are no such $\indK$-super-spoilers and so the (finite) satisfiability test in the 4th line of Procedure~\ref{algo:ucq-entailment-over-L} will never succeed. 
Hence, the 5th line will be executed, returning~$\texttt{True}$.\\ 
The second part of the lemma follows immediately from of~\cref{lemma:super-spoiler-small-size} and~\cref{lemma:enumerate-spoilers} and from the fact that $\SAT_{\abstrDL}(\textsf{poly}(\kbK)) {+} \mathsf{exp}(|\indK|{+}|\queryq|) \cdot \SAT_{\abstrDL}(\mathsf{poly}(|\kbK| {+} |\queryq|))$ is bounded by $\mathsf{exp}(|\kbK|{+}|\queryq|) \cdot \SAT_{\abstrDL}(\mathsf{poly}(|\kbK| {+} |\queryq|))$.
\end{proof}

Relying on the above lemma, we conclude our main theorem. 
\begin{thm}
For any (finitary) locally-forward abstract DL $\abstrDL$ with (finite) $\abstrDL$-KB-satisfiability problem decidable in time $\SAT_{\abstrDL}(\cdot)$, there exists a polynomial and an exponential function $\mathsf{poly}$ and $\mathsf{exp}$ such that the (finite) UCQ-entailment problem over $\abstrDL$-KB is decidable, for an input $\kbK, \queryq$, in time $\mathsf{exp}(|\kbK|+|\queryq|) \cdot \SAT_{\abstrDL}(\mathsf{poly}(|\kbK| {+} |\queryq|))$.
\end{thm}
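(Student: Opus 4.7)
The plan is essentially to observe that the theorem is an immediate consequence of Lemma~\ref{lemma:correctness}, which has already been established for Procedure~\ref{algo:ucq-entailment-over-L}. All the substantive work was done in the preceding subsections, so my proof would be a short wrap-up rather than a new argument.

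Concretely, I would proceed as follows. First, I would exhibit Procedure~\ref{algo:ucq-entailment-over-L} as the decision procedure witnessing the theorem. Then I would invoke the first clause of Lemma~\ref{lemma:correctness} to conclude that the procedure returns \texttt{True} iff $\kbK \modelsoptfin \queryq$, and the second clause of the same lemma to conclude that it can be implemented to run in time $\mathsf{exp}(|\kbK|+|\queryq|) \cdot \SAT_{\abstrDL}(\mathsf{poly}(|\kbK|{+}|\queryq|))$ for suitable polynomial $\mathsf{poly}$ and exponential $\mathsf{exp}$. Since the (finitary) locally-forward hypothesis on $\abstrDL$ is exactly what the preceding machinery requires (through \cref{fact:only-lff-like-countermodels-matters} and \cref{lemma:spoliers-spoil-ucq-over-locally-forward-dls}), and since the $\models$ and $\modelsfin$ cases are handled uniformly by the \texttt{(fin)} conventions, nothing further is needed.

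For completeness I might briefly recap why the running time comes out as claimed: by \cref{lemma:enumerate-spoilers} one enumerates the $O(\mathsf{exp}(|\indK|{+}|\queryq|))$ many $\indK$-super-spoilers in exponential time; by \cref{lemma:super-spoiler-small-size} each $\kbK \cup \bigcup_i \superspoilerKBof{\queryq_i}$ has size polynomial in $|\kbK|+|\queryq|$; so each (finite) satisfiability test costs $\SAT_{\abstrDL}(\mathsf{poly}(|\kbK|{+}|\queryq|))$, and the outer loop multiplies these bounds. Adding the single initial satisfiability check on $\kbK$ does not change the asymptotics.

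There is no genuine obstacle at this stage: the hard parts are already absorbed into the correctness of splittings (\cref{lemma:splittings-linking-notions-together-from-splitting-to-match}), the spoiler-to-satisfiability reduction (\cref{lemma:spoliers-spoil-ucq-over-locally-forward-dls}) and the combinatorial estimates (\cref{lemma:special-shape-of-super-spoilers,lemma:super-spoiler-small-size,lemma:enumerate-spoilers}). Accordingly, the proof amounts to little more than stating ``by Lemma~\ref{lemma:correctness}'' and could reasonably be omitted or replaced by a one-line remark.
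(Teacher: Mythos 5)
Your proposal is correct and matches the paper's approach exactly: the theorem is stated immediately after Lemma~\ref{lemma:correctness} with the one-line justification ``Relying on the above lemma, we conclude our main theorem.'' Your brief recap of the running-time accounting (via Lemmas~\ref{lemma:enumerate-spoilers} and~\ref{lemma:super-spoiler-small-size}) is consistent with how the paper establishes the second part of Lemma~\ref{lemma:correctness}.
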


The most important application of our work is when the (finite) knowledge base satisfiability problem for $\abstrDL$ is $\ExpTime$-complete.
Then the function $\mathsf{exp}(|\kbK|+|\queryq|) \cdot \SAT_{\abstrDL}(\mathsf{poly}(|\kbK| {+} |\queryq|))$ is actually a single exponential function, and hence, we have the following corollary (the lower bound follows from $\ALC$~\cite[Thm. 5.13]{dlbook}).
\begin{corollary}
The (finite) (U)CQ entailment problem is $\ExpTime$-complete for any (finitary) locally-forward abstract DL with $\ExpTime$-complete (finite) knowledge base satisfiability problem.
\end{corollary}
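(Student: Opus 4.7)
The plan is to obtain the corollary as a routine two-sided consequence of what is already in place: the upper bound by specialising the preceding theorem, and the lower bound by transferring the classical $\ExpTime$-hardness of $\ALC$-KB satisfiability into UCQ entailment via a one-line reduction.

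For the upper bound I would simply unfold the running time guaranteed by the preceding theorem, namely $\mathsf{exp}(|\kbK|+|\queryq|) \cdot \SAT_{\abstrDL}(\mathsf{poly}(|\kbK|+|\queryq|))$, under the extra hypothesis that $\SAT_{\abstrDL}$ itself is bounded by a single-exponential function of its argument. Composing a single exponential with a polynomial still yields a single exponential, and the product of two single-exponential factors is again single-exponential; hence Procedure~\ref{algo:ucq-entailment-over-L} decides (finite) UCQ entailment in $\ExpTime$.

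For the matching lower bound I would invoke the classical $\ExpTime$-hardness of $\ALC$-KB satisfiability from the cited handbook theorem. Since every abstract DL semantically extends $\ALCcap$ (and hence $\ALC$) and polynomially encodes $\ALCcap$ concepts, this hardness transfers verbatim to any $\abstrDL$ satisfying the hypotheses of the corollary. It remains to reduce $\abstrDL$-KB (un)satisfiability to UCQ (non-)entailment, which is standard: pick a concept name $\conceptA$ not appearing in $\kbK$ and observe that $\kbK$ is (finitely) satisfiable if and only if $\kbK \not\modelsoptfin \conceptA(\varx)$, because any (finite) model of $\kbK$ can be adjusted to interpret $\conceptA$ as empty without breaking any axiom of $\kbK$. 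Closure of $\ExpTime$ under complement then yields $\ExpTime$-hardness of entailment itself.

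I do not anticipate a real obstacle, as both steps are essentially bookkeeping. The only mild point of care is that the fresh-concept trick used for the lower bound must stay inside the signature handled by $\abstrDL$, which is immediate from the assumption that abstract DLs semantically extend $\ALCcap$ and hence admit fresh atomic symbols without loss of generality.
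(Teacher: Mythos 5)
Your argument matches the paper's: the upper bound follows by specialising the preceding theorem's running-time bound $\mathsf{exp}(|\kbK|+|\queryq|)\cdot\SAT_{\abstrDL}(\mathsf{poly}(|\kbK|+|\queryq|))$ under the hypothesis that $\SAT_{\abstrDL}$ is single-exponential, and the lower bound is inherited from $\ALC$, which is precisely what the paper does by citing the handbook's $\ExpTime$-hardness result for $\ALC$. The only difference is that you spell out the standard fresh-concept reduction from $\ALC$-KB (un)satisfiability to (finite) UCQ (non-)entailment, which the paper leaves implicit behind the citation; that reduction is correct, including in the finite case since $\ALC$ has the finite model property.
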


Recall from the beginning of the section that $\ALCSCC$ is finitary locally-forward and that any abstract DL $\abstrDL$ contained in $\ALCHbregQ$ is locally-forward. Since their corresponding satisfiability problems are $\ExpTime$-complete, by the above corollary we conclude:
\begin{thm}
The finite UCQ entailment problem for $\ALCSCC$ is $\ExpTime$-complete and the UCQ entailment problem for any $\ALC \subseteq \abstrDL \subseteq \ALCHbregQ$ is $\ExpTime$-complete.
\end{thm}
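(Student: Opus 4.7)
The plan is to derive this theorem as a direct instantiation of the preceding corollary, whose hypotheses are that (i) the logic is (finitary) locally-forward and (ii) its (finite) KB-satisfiability is $\ExpTime$-complete. Accordingly, the task reduces to certifying (i) and (ii) for the two families $\ALCSCC$ and $\ALC \subseteq \abstrDL \subseteq \ALCHbregQ$, after which the upper bounds follow mechanically, and the matching lower bound is inherited from $\ExpTime$-hardness of KB-satisfiability for plain $\ALC$ (cited as \cite[Thm.~5.13]{dlbook}), since satisfiability polynomially reduces to UCQ non-entailment by querying a trivially unsatisfiable atom.

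For $\ALCSCC$, I would invoke the remarks already made at the end of the subsection: by \cite[Lemmas 14--20]{BaaderBR20} every finitely satisfiable $\ALCSCC$-KB admits finite lff-covers, so $\ALCSCC$ is finitary locally-forward in the sense of~\cref{def:locally-forward-abstract-dl}; and finite $\ALCSCC$-KB satisfiability is $\ExpTime$-complete (again by the same work). Applying the corollary then yields $\ExpTime$-completeness of finite UCQ entailment.

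For the second family, I would argue that every abstract DL $\abstrDL$ with $\ALC \subseteq \abstrDL \subseteq \ALCHbregQ$ is locally-forward: by inspecting the forest-unravelling construction in the proof of~\cite[Lemma~3.2.13]{OrtizPhD10}, every satisfiable $\ALCHbregQ$-KB $\kbK$ has a forward-forest countermodel, and any such countermodel is trivially $(n,\indK)$-lff-like for every $n$, yielding lff-coverability as required by~\cref{def:covered-by-locally-forward-forest-like-structures}. Since KB-satisfiability for $\ALCHbregQ$ (and hence for every $\abstrDL$ sandwiched between $\ALC$ and $\ALCHbregQ$) is $\ExpTime$-complete, the corollary again gives an $\ExpTime$ upper bound for UCQ entailment; the lower bound is inherited from $\ALC$.

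The only delicate step is ensuring that ``sandwiched'' DLs really inherit both (i) and (ii): the upper bound on satisfiability is clear by monotonicity in expressive power (any $\abstrDL$-KB is also an $\ALCHbregQ$-KB, polynomially), but locally-forwardness of a sub-logic does not follow automatically from that of the super-logic. Here the point is that \cref{def:covered-by-locally-forward-forest-like-structures} is a property of \emph{models}, not of syntax: since every satisfiable $\abstrDL$-KB is in particular a satisfiable $\ALCHbregQ$-KB, it admits the same forward-forest countermodels guaranteed by~\cite[Lemma~3.2.13]{OrtizPhD10}, which are $(n,\indK)$-lff-like and remain models of the original $\abstrDL$-KB. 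This semantic transfer is the only subtlety; once it is articulated, everything else is bookkeeping against the corollary.
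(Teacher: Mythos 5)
Your proposal is correct and follows essentially the same route as the paper: instantiate the corollary, certifying that $\ALCSCC$ is finitary locally-forward with $\ExpTime$-complete finite satisfiability (via \cite[Lemmas 14--20]{BaaderBR20}) and that every $\abstrDL$ between $\ALC$ and $\ALCHbregQ$ is locally-forward with $\ExpTime$-complete satisfiability (via the forest-unravelling in \cite[Lemma~3.2.13]{OrtizPhD10}), then inherit the lower bound from $\ALC$. Your observation that locally-forwardness of a sub-logic transfers because lff-coverability (\cref{def:covered-by-locally-forward-forest-like-structures}) is a semantic property of the model class of each KB -- so any satisfiable $\abstrDL$-KB, being also an $\ALCHbregQ$-KB, admits the same lff-like covers -- is precisely the implicit reasoning behind the paper's brief remark; the paper simply states the prior observations and applies the corollary without spelling this out. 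One small wording quibble: CQ atoms are always satisfiable, so the lower-bound reduction is not ``querying a trivially unsatisfiable atom'' but rather querying a fresh concept atom $\conceptA(\varx)$, which is entailed iff the KB is inconsistent; the paper leaves this as a bare citation to \cite[Thm.~5.13]{dlbook}, so the imprecision is cosmetic.
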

This closes numerous gaps in the complexity of query entailment that were present in the literature, \eg the complexities of CQ entailment for $\ALC\textit{b}$ or UCQ entailment for $\ALCHQ$ were unknown. It also proves that regular role expressions from $\ALCHbregQ$ do not increase the complexity of querying, as it is the case for nominals, inverses or self-loops.
As a last remark: any PEQ can be transformed into a (U)CQ of exponential size. 
This yields us $\TwoExpTime$-completeness of PEQ querying for any logics mentioned in the above theorem. The lower bound holds already for $\ALC$~\cite[Thm.~1]{OrtizS14}.

\section*{Acknowledgements}
I thank my supervisor, Sebastian Rudolph, for many useful comments and, most notably, for his extreme patience while supervising me.
This work is supported by the ERC Consolidator Grant No.~771779 (\href{https://iccl.inf.tu-dresden.de/web/DeciGUT/en}{DeciGUT}).
\begin{figure}[H]
    \centering
    \includegraphics[scale=0.045]{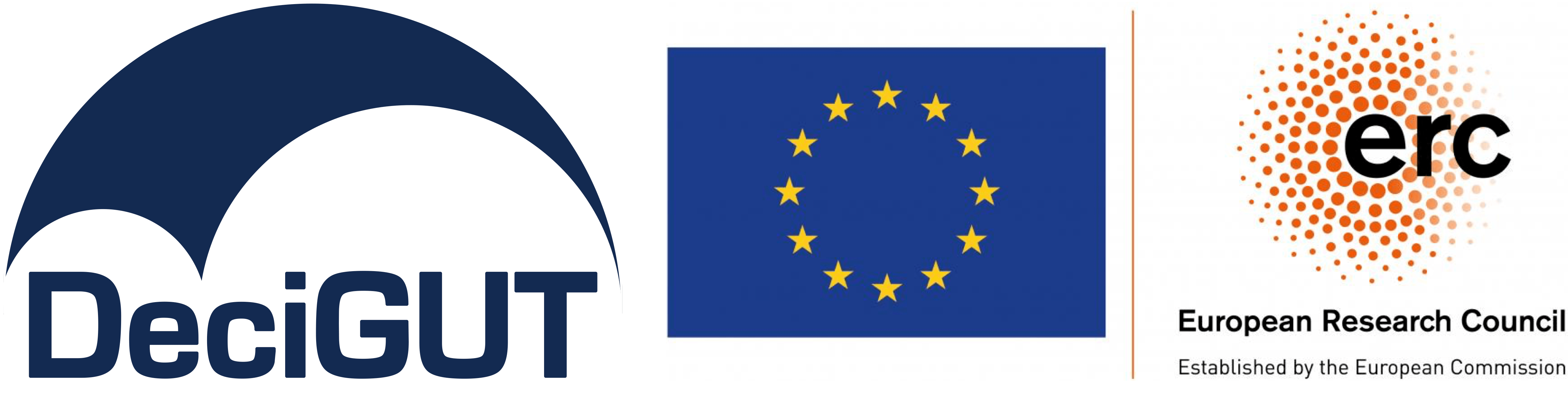}
\end{figure}
\bibliographystyle{alpha}
\bibliography{references}
\end{document}